\newtheorem{theorem}{Theorem}[section]
\newtheorem{lemma}[theorem]{Lemma}
\newtheorem{question}[theorem]{Open question}
\theoremstyle{remark}
\newtheorem{remark}[theorem]{Remark}
\theoremstyle{definition}
\newtheorem{definition}[theorem]{Definition}
\newtheorem{proposition}[theorem]{Proposition}
\newtheorem{corollary}[theorem]{Corollary}
\renewcommand{\tr}{\mathrm{Tr}}
\newcommand{\feval}{\mathsf{F}}
\newcommand{\finv}{\mathsf{FInv}}
\newcommand{\secp}{\lambda}
\newcommand{\negl}{\mathsf{negl}}
\newcommand{\binset}{{\{0,1\}}}
\newcommand{\cA}{\mathcal{A}}
\newcommand{\bbN}{\mathbb{N}}    
\newcommand{\bbE}{\mathbb{E}}
\newcommand{\tnote}[1]{\textcolor{blue}{\footnotesize{\bf (Thomas:} {#1}{\bf ) }}}
\newcommand{\rnote}[1]{\textcolor{red}{\footnotesize{\bf (Rotem:} {#1}{\bf ) }}}
\newcommand{\znote}[1]{\textcolor{Violet}{\footnotesize{\bf (Zvika:} {#1}{\bf ) }}}
\renewcommand{\tnote}[1]{}
\renewcommand{\rnote}[1]{}
\renewcommand{\znote}[1]{}
\newcommand{\mH}{\mathcal{H}}
\newcommand{\mA}{\mathcal{A}}
\newcommand{\C}{\mathbb{C}}
\newcommand{\N}{\mathbb{N}}
\newcommand{\eps}{\varepsilon}
\newcommand{\Id}{I}
\newcommand{\mC}{\mathcal{C}}
\newcommand{\lass}{\leq_{\scriptscriptstyle\infty}}
\newcommand{\gass}{\geq_{\scriptscriptstyle\infty}}
\title{Computational Entanglement Theory}
\author{Rotem Arnon-Friedman}
\affil{The Center for Quantum Science and Technology, Faculty of Physics, Weizmann Institute of Science, Israel}
\author{Zvika Brakerski}
\author{Thomas Vidick}
\affil{The Center for Quantum Science and Technology, Faculty of Mathematics and Computer Science, Weizmann Institute of Science, Israel}
\date{\today}
\begin{document}

\maketitle

\begin{abstract}
We initiate a rigorous study of \emph{computational entanglement theory}, inspired by the emerging usefulness of ideas from quantum information theory in computational complexity. 

We define new operational computational measures of entanglement—the computational one-shot entanglement cost and distillable entanglement. We then show that the computational measures are fundamentally different from their information-theoretic counterparts by presenting gaps between them. 
We proceed by refining and extending the definition of \emph{pseudo-entanglement}, introduced in~\cite{aaronson2022quantum}, using the new operational measures; and we present constructions of pseudo-entangled states (for our new definition) based on post-quantum cryptographic assumptions. 

Finally, we discuss the relations between computational entanglement theory and other topics, such as quantum cryptography and notions of pseudoentropy, as well as the relevance of our new definitions to the study of the AdS/CFT correspondence.

We believe that, in addition to the contributions presented in the current manuscript, our work opens multiple research directions, of relevance both to the theoretical quantum information theory community as well as for future applications of quantum networks and cryptography. 
\end{abstract}

\section{Introduction}

    In recent years, many novel insights, in physics as in computer science, have been gained by combining ideas from quantum information theory and from computational complexity. The list is varied and includes, e.g., works in which post-quantum cryptographic assumptions are used to achieve quantum information-theoretic tasks, such as generating certifiable true randomness~\cite{brakerski2021cryptographic}, constructions of pseudo-random states~\cite{ji2018pseudorandom} which lead to  a search for new foundations for quantum cryptography~\cite{morimae2021quantum,AQY21,kretschmer2021quantum}, insightful studies of the complexity of finding and applying unitaries~\cite{bostanci2023unitary}, and leveraging computational complexity arguments to try to address paradoxes at the interface of quantum information and quantum gravity, e.g.~\cite{aaronson2016complexity,bouland2020computational}. 
    Of specific relevance for us are recent works~\cite{gheorghiu2020estimating,aaronson2022quantum}, that introduced the concept of ``pseudo-entanglement'' and applied it to areas ranging from quantum property testing to the study of dualities in quantum gravity.  

    In the information-theoretic setting (i.e., when the computational cost of performing operations is not taken into account) the study of entanglement is termed ``entanglement theory.''
    Entanglement theory is rich and complex: it studies the different forms and features of entanglement, ways to quantify it, tasks that can be performed using entanglement as a resource, operations that increase, preserve or consume it, and much more. 
    Moreover, fundamental results from entanglement theory play a major role in other sub-fields of quantum information theory, such as quantum cryptography and quantum Shannon theory. 

    In this context, the following question arises. 
    \begin{center}
        \emph{How is entanglement theory affected when all operations are required to be computationally efficient?}
    \end{center}
     This question is conceptually natural, and if one is to judge by previous successes at this interface, is likely to lead to important insights on the nature of entanglement and its role in all areas. Moreover, and even though this is not our focus, the question is bound to become increasingly relevant as experimental work on, e.g., quantum networks slowly matures. 
 
    In this work we set the ground for a rigorous study of a resource theory\footnote{See~\cite{chitambar2019quantum} for a general study of resource theories.} that we term \emph{computational entanglement theory}. 
    Towards this, we fix the ``allowed operations'' to be efficient LOCC protocols and define ``computationally maximally entangled states" (more details below).
    We further establish new \emph{computational} analogues of the most widely used entanglement measures, and study their operational meaning and their relation to the information-theoretic measures. 
    In addition, we refine the definition of pseudo-entanglement and discuss implications in quantum cryptography as in quantum gravity. 

    When initiating the study of computational entanglement theory, we build on the wide-ranging study of entanglement in the information-theoretic setting. 
    We present the topics that we discuss in the work and our contributions in what follows. 

    \paragraph{Computational entanglement measures.}
        In entanglement theory, local operations and classical communication (LOCC) maps are used to classify and measure resources, since such operations cannot be used to increase entanglement (they are also called ``free operations'' in the context of resource theories). 
        This is, of course, true independently of whether the maps are efficient or not. 
        Once one takes into account computational considerations, it is natural to use \emph{efficient LOCC maps} and define computational entanglement measures using such maps. We introduce two computational measures: the \emph{one-shot computational entanglement cost} and the \emph{one-shot distillable entanglement}. Similarly to their information-theoretic counterparts, these are defined via optimization problems over computationally efficient LOCC protocols that start (or end) with maximally entangled states as input (or output). The precise definition of the computational measures requires some care; in particular, they are only meaningfully defined for \emph{families} of states, and ``in the limit''-- this is because the widely accepted notion of ``efficient'' in computational complexity refers to polynomial-time computations, and polynomials can always be defined to take arbitrary values at a finite number of points.

         The motivation for considering the computational one-shot measures, as we define them, is that they have an operational meaning (for both pure and mixed states) and are hence of relevance for applications. 
            For example, when considering future quantum networks, entanglement between different nodes in the  network will need to be distilled. 
            When not taking the computational power of the nodes into account, the one-shot distillable entanglement of the states distributed on the network underlies the ability of implementing tasks requiring entanglement over the network (e.g., teleportation or non-local computation).
            Of course, in practice, one will need to take into account the computational power of the nodes in the network. Then, our definitions are the ones that need to be evaluated to determine the network ``capacity'' or other related measures. 
        
				There are some subtleties in the definitions which we point out. In particular, since we are working with LOCC maps, the notion of efficiency departs from the usual notion used when working with a single, ``non-local'', circuit. The states we consider will generally be \emph{bipartite}, meaning that they are states on $n_A+n_B$ qubits, with the first $n_A$ qubits belong to ``system $A$'' and the last $n_B$ belong to ``system $B$.'' Such a bipartition will usually be understood from context; it plays an important role in the semantics attached to the state. In particular, the fact that a bipartite state of low entanglement (implicitly, across the $A:B$ bipartition) can be prepared efficiently using a single circuit \emph{does not} imply that it has low entanglement cost in the LOCC model. This already points to some subtle features of the new theory, some of which we point out, and others will undoubtedly be discovered at later stages.

    \paragraph{Gaps between information theoretic and computational measures.}
    Our first set of results establishes separations between information-theoretic and computational measures of entanglement. These results show that adding the requirement of computational efficiency can drastically increase the entanglement cost of a bipartite state, and drastically reduce its distillable entanglement. In fact, our first result is an existential statement, based on a counting argument, which establishes the following. There exists families of (pure) bipartite quantum states on $n$ qubits for which (i) the information-theoretic entanglement cost is zero but the computational entanglement cost is super-polynomially large, and (ii) the information-theoretic distillable entanglement is maximal, i.e.\ $n$, but the computational distillable entanglement is zero. (See Lemma~\ref{lem:counting-separation} for a precise statement of this result.) 
    
    It is not hard to convince oneself that these separations are optimal. However, one could argue that they have an element of ``cheating'' in them: it is well-known that there exists quantum states that are not efficiently preparable, and so one would expect that the same should apply to states preparable by LOCC (the statement about distillable entanglement is a bit more subtle, and requires a new idea for the proof). 
    To address this limitation of the existential argument, we give two constructions that provide analogous separations, with both constructions  based on families of efficiently preparable states. 

    Firstly, we give a family of efficiently preparable states on $O(n)$ qubits, whose information-theoretic distillable entanglement is $n$ and yet have zero computational distillable entanglement. (See Lemma~\ref{lem:no-comp-dist}.) The definition of the family, as well as this last statement, rely on a standard cryptographic assumption, the existence of secure post-quantum one-way functions. This allows us to define bipartite states that can be constructed efficiently, but whose distillable entanglement is only available to computationally inefficient distillation maps. 
    
Showing a separation for the entanglement cost is more difficult. This is because in order to achieve this we need to devise a family of bipartite states such that the states are efficiently preparable locally, yet they cannot be prepared by any efficient LOCC maps. It is notoriously difficult to study general LOCC maps; for this separation we restrict ourselves to one-way LOCC (meaning that the communication can only take place from a specific side of the bipartition to the other). We also make use of an (arguably) stronger cryptographic assumption, the quantum random oracle (QRO) model (together with an efficient inversion oracle). For any function $\ell(n)\geq n$ we show that there exists a family of states on $2\ell(n)$ qubits that can be efficiently prepared (assuming quantum query access to a classical random oracle and its inverse), and furthermore the states have information-theoretic entanglement cost $n$, yet any efficient one-way LOCC protocol that prepares them must use essentially $\ell(n)$ EPR pairs.  (See Lemma~\ref{lem:ec_gap}).
This bound is optimal, since any efficiently preparable state on $2\ell(n)$ qubits can be prepared efficiently via one-way LOCC and $\ell(n)$ EPR pairs, using teleportation.

    \paragraph{New definition for pseudo-entanglement.}
		
        Having set up the stage with the definition of new entanglement measures we are equipped to formulate a natural definition of pseudo-entanglement. The term ``pseudo-entanglement'' is first introduced in~\cite{aaronson2022quantum} using the von Neumann entropy as a proxy for entanglement, and shown to have a range of potential applications from quantum property testing to quantum gravity. We give a new definition, that we find more natural  and has greater potential for applications where entanglement is used as a resource. We take inspiration from the notion of pseudo-randomness: a distribution is said to be pseudo-random if it is computationally indistinguishable from uniform. We further note that the impact for applications of this definition is that (as long as one only considers polynomial-time processes) it allows one to replace a pseudo-random distribution, which may have a variety of origins, with a uniform distribution, which is much easier to analyze. Finally, one expects that the pseudo-random distribution is a resource which is ``easier'' to produce, in the sense that it requires less randomness than the uniform distribution itself. 
       
			We argue that the right analogue of this definition to the resource of entanglement is the following. Informally, a family of bipartite quantum states $\{\rho_{AB}^k\}$, pure or mixed, each indexed by a \emph{key} $k\in\{0,1\}^*$, is said to be \emph{pseudo-entangled} if two conditions hold. Firstly, it has a low computational entanglement cost ($\rho_{AB}^k$ can be prepared efficiently from the key $k$ and a small number of EPR pairs, using only LOCC operations). Secondly, it cannot be distinguished from a state that is ``maximally entangled'' from the point of view of our theory, i.e.\ the state has maximal entanglement and furthermore this entanglement can be distilled \emph{efficiently} using LOCC. Here, distinguishing should be impossible by any computationally bounded quantum distinguisher, even given a polynomial number of copies of either state. Importantly, while the distinguisher is given polynomially many copies of the state $\rho^k$, it does \emph{not} get the key $k$ (as otherwise it could attempt to distill in order to distinguish low-entanglement from high-entanglement states). This part of the definition is analogous to other definitions in quantum cryptography, such as the definition of pseudorandom states~\cite{ji2018pseudorandom}. See Section~\ref{sec:constructions} for a complete definition.

        For readers familiar with~\cite{aaronson2022quantum}, let us quickly summarize the three main differences between the definitions. 
       (i) We consider mixed states, instead of only pure states. 
       (ii) Our notion of efficiency of the states (not the distinguisher) is in terms of LOCC protocols, instead of allowing general efficient but ``non-local'' circuits.\footnote{Similar notions of efficiency are also considered in the context of relativistic quantum cryptography~\cite{buhrman2014position,speelman2016instantaneous} and quantum interactive games~\cite{gutoski2007toward}.} 
        (iii) We use the one-shot computational entanglement cost and distillable entanglement, which have operational meanings, instead of the entanglement entropy used in~\cite{gheorghiu2020estimating,aaronson2022quantum}.

        We remark that the usage of distillable entanglement instead of the entanglement entropy (defined via the von Neumann entropy) parallels the classical setting: 
        random variables with high \emph{min-entropy} can be distilled to uniform using a randomness extractors. Similarly, quantum states with high \emph{distillable entanglement} can be distilled to maximally entangled states using entanglement distillation protocols. 
        In the same manner that pseudo-randomness is defined using the min-entropy rather than the Shannon entropy, we use the distillable entanglement instead of the entanglement entropy (i.e., the von Neumann entropy of a subsystem).

    \paragraph{Construction of pseudo-entangled states.}
		
        We present constructions for families of pseudo-entangled states that exploit post-quantum cryptographic assumptions. Interestingly, the ``subset-state'' construction of pseudo-entangled states from~\cite{aaronson2022quantum} does \emph{not} seem to fit into our notion of efficiency; while the state are efficiently preparable, the natural LOCC protocol for preparing them uses a maximal number of EPR pairs, thereby effectively ``voiding'' for most applications the fact that these states have low entanglement.\footnote{We note that in~\cite{aaronson2022quantum} the construction allows to discuss the entanglement entropy between any division of the state to two parties. We, on the other hand, fix the division in advance.} (This point is discussed in Section~\ref{sec:seperation}.) To work around this limitation we give somewhat simpler constructions, building on well-known primitives, that do work. Interestingly, our construction seems very close, if not identical (up to a much simpler presentation and analysis) to the construction in the first version of \cite{aaronson2022quantum}. In particular, the construction gives families of states with as low a computational entanglement cost as possible, i.e.\ linear in the security parameter (which can be e.g.\ polylogarithmic in the number of qubits), but are computationally indistinguishable from a family of computational maximally entangled states-- states whose computational distillable entanglement is maximal ($n$). 
							
    \paragraph{Relation to other topics.}
       As mentioned in the beginning of the introduction, entanglement theory is connected to many other fields in quantum information theory and beyond. One can therefore ask how computational entanglement theory may impact the study of other concepts. 
       In Section~\ref{sec:rel-other-top} we discuss two (out of many)  relevant subject areas where there is likely to be an impact. 

       Our first focus is on the connection between pseudo-entanglement, pseudo-randomness and quantum HILL-entropies~\cite{chen2017computational}. 
       In the information theoretic case, due to monogamy of entanglement, there is a clear relation between the two. 
       We discuss the difficulties in the computational case (also related to the recent works regarding the complexity of finding and applying unitaries on a purifying system~\cite{bostanci2023unitary}) and present fundamental open questions.  
    
        The second context in which we consider our result is that of the computational complexity of the AdS/CFT correspondence. 
        Following up on the discussion presented in~\cite{aaronson2022quantum}, we explain why our refined definition of pseudo-entanglement  strengthens the link to quantum gravity by allowing to examine more geometries. Capturing complex geometries is crucial, as it is believed that the AdS/CFT dictionary is computationally hard to compute only in certain cases~\cite{engelhardt2021world}. 
        An important observation is that current research in quantum gravity considers geometries that are linked to mixed quantum states and one-shot information theory~\cite{akers2019large,akers2023one}.
        Our definition of pseudo-entanglement allows to consider such cases and by this expands the relevance of pseudo-entanglement in the context of the study of the complexity of the AdS/CFT correspondence. (For more details see Section~\ref{sec:rel-other-top}).

    \paragraph{Further work and open questions.}
		The main goal of this work is to alert researchers working in quantum information theory, entanglement theory and computational complexity to the possibility for an important, influential computational entanglement theory. We lay elementary foundations for this theory by giving fundamental definitions and pointing out the emergence of subtle features that arise from the unique combination of entanglement and computation; beyond what might be expected from a more superficial assessment. We also open the way for applications of the new theory, to computational information theory, to quantum pseudo-randomness, and to quantum gravity. In these applications we follow the lead of~\cite{aaronson2022quantum}, a work which provided us great inspiration; and yet aim to provide a compelling argument for the possibility of better and more impactful definitions. 
      The main reason that we believe that our definitions open the door for more applications is that~\cite{aaronson2022quantum} are mainly interested in the Schmidt rank of the states and the (in)ability to distinguish states from Haar-random states. While this is termed in~\cite{aaronson2022quantum} pseudo-entanglement, the notions and applications appear to be more closely related to the randomness present in those states rather to the fundamental notion of entanglement. 
			
			Of course we only scratch the surface of the new theory. Other computational measures could be defined. The relation between entanglement manipulation tasks and, e.g., smooth entropies~\cite{tomamichel2015quantum,khatri2020principles} or the computational HILL entropies~\cite{chen2017computational} seem like a particularly fruitful avenue to explore-- do the  connections known in the information-theoretic setting extend to the computational framework? 
			
			In a different direction, it would be interesting to closely relate the notion of pseudo-entanglement to the recent notion of pseudorandom states~\cite{ji2018pseudorandom}, or other fundamental assumptions in quantum cryptography such as ``EFI pairs''~\cite{brakerski2023computational}. Our constructions and previous works establish some connections, and it remains an interesting research direction to further explore such links.

			In terms of applications, the connection between computational entanglement and quantum gravity remains to be explored in detail. As detailed in Section~\ref{sec:ads_cft}, recent works have emphasized the importance of mixed states, and of the use of refined entanglement measures, to develop a fine-grained understanding of dualities in the study of quantum gravity. Separately, the relevance of computational complexity has been repeatedly pointed out, see e.g.~\cite{susskind2016computational,brown2016complexity} and many follow-up works.  Combining the two seems like a \emph{sine qua non}  condition for further progress, which we have to leave to future work.

\section{Entanglement in the information-theoretic setting}\label{sec:em_it}

    This section acts as a preliminary section, devoted to the introduction of entanglement measures in the standard information-theoretic setting. Readers familiar with standard entanglement measures may skip it; no new results are presented. For surveys on entanglement theory we refer to~\cite{plenio2007introduction,horodecki2009quantum}, and for a much more detailed, pedagogical survey of the field we refer to~\cite{khatri2020principles}. For even more general background in quantum information, such as the definition of quantum channels, the fidelity and diamond norm, etc., we refer to the book~\cite{watrous2018theory}.

    A bipartite entangled state is any state that is not separable. That is, it cannot be written as $\psi_{AB}=\sum_i p(i) \psi_A^i \otimes \psi_B^i$, where here the division into $A$ and $B$ components of the state is always implicit from context. 
    To quantify entanglement and compare different states one needs to define \emph{entanglement measures}. We present several entanglement measures and results that are relevant for us. The measures are defined in terms of optimizations over protocols that are based on a combination of (quantum) local operations and two-way classical communication, termed \emph{LOCC protocols}. For convenience we recall the definition.

\begin{definition}\label{def:locc-map} 
    	An LOCC map is any quantum channel 
			\begin{equation}\label{eq:dist_map}
			\Gamma: \mathcal{H}_A \otimes \mathcal{H}_B \to \mathcal{H}_{\bar{A}} \otimes \mathcal{H}_{\bar{B}} 
			\end{equation}
that can be implemented by a two-party interactive protocol, where each party can implement arbitrary local quantum computation and the two parties can exchange arbitrary classical information. Concretely, any such protocol is the sequential iteration of (a) an arbitrary quantum channel on system $A$, followed by a classical message sent from $A$ to $B$, and (b) an arbitrary quantum channel on system $B$, followed by a classical message sent from $B$ to $A$.

Whenever an LOCC map $\Gamma$ as in~\eqref{eq:dist_map} is considered, we will use the notation $n_A = \log_2\dim(\mH_A)$, $n_B= \log_2\dim(\mH_B)$ and $m_A = \log_2\dim(\mH_{\bar{A}})$, $m_B= \log_2\dim(\mH_{\bar{B}})$. If $n_A=n_B$ (resp.\ $m_A=m_B$) then we sometimes also write $n$ for $n_A,n_B$ (resp.\ $m$ for $m_A,m_B$).
\end{definition}
		
        Note that an LOCC map cannot create entanglement; indeed this fact is generally taken as an axiom in the development of any entanglement theory.       
        In certain cases one may restrict the attention to a limited class of, e.g., 1-way LOCC protocols where communication is allowed only in one direction\footnote{1-way LOCC protocols are of special interest since they are the best understood (in the information-theoretic setting). In particular, much more is known regarding the relation between the distillable entanglement, the distillable key using quantum key distribution protocols, and various entropic quantities, when restricting to $1$-way LOCC.}
        or protocols using a small total amount of communication.
        All the definitions in this manuscript can be easily adapted in the obvious manner to restrict the class of maps. 
        We will encounter statements regarding such classes of protocols in Section~\ref{sec:seperation}.

    \subsection{Distillable entanglement}

       Consider an LOCC map $\Gamma$ as in~\eqref{eq:dist_map}, 
       where $\mathcal{H}_A = \mathcal{H}_B = (\C^2)^{\otimes n}$ and $\mathcal{H}_{\bar{A}} = \mathcal{H}_{\bar{B}} = (\C^2)^{\otimes m}$. 
        The maximally entangled state on the space $\mathcal{H}_{\bar{A}} \otimes \mathcal{H}_{\bar{B}}$ is denoted $\Phi^{\otimes m}$, where $\ket{\Phi}=\frac{1}{\sqrt{2}}\left(\ket{00}+\ket{11}\right)$ and $\Phi=\proj{\Phi}$.

        The goal of an entanglement distillation protocol $\Gamma$ is to start with a bipartite state $\rho_{AB}$ and end up with a state close to $\Phi^{\otimes m}$. 
        For $\Gamma$ as in Equation~\eqref{eq:dist_map} we define its \emph{distillation error} on an input state $\rho_{AB}$ as
        \begin{equation}\label{eq:dist_err}
            p_{err}\left(\Gamma, \rho \right) = 1 - F\left(\Gamma(\rho),\Phi^{\otimes m}\right) = 1 - \bra{\Phi^{\otimes m}} \Gamma(\rho) \ket{\Phi^{\otimes m}} \;,
        \end{equation}
where $F$ denotes the fidelity.

	\begin{definition}[One-shot distillable entanglement]
		Let $\varepsilon\in [0,1]$. The one-shot distillable entanglement of a bipartite state $\rho_{AB}\in \mathcal{H}_A \otimes \mathcal{H}_B$ is given by
		\begin{equation}\label{eq:one_shot_E_D}
			E_D^{\varepsilon}(\rho_{AB}) \,=\, \sup_{m, \Gamma} \left\{ \;  m \;  \big|   \; 
 p_{err}\left(\Gamma, \rho \right) \leq \varepsilon  \right\} \;,
				\end{equation}
				where $\Gamma$ is as in Equation~\eqref{eq:dist_map} and $p_{err}$ is given in Equation~\eqref{eq:dist_err}.
		\end{definition}
        Note that in the preceding definition the optimal $\Gamma$ in general depends on the state $\rho$ and on the parameter~$\varepsilon$. 
 
			\begin{definition}[Asymptotic IID distillable entanglement]
				The asymptotic IID distillable entanglement of a bipartite state $\rho\in\mathcal{H}_A \otimes \mathcal{H}_B$ is given by
				\begin{equation}\label{eq:asym_iid_E_D}
					E_D^{\infty}(\rho) = \inf_{\varepsilon\in(0,1]} \,\liminf_{t\rightarrow\infty} \, \tfrac{1}{t} E_{D}^{\varepsilon}\left(\rho_{AB}^{\otimes t} \right) \;.
				\end{equation}
			\end{definition}
	
		We remark that in the preceding definitions $E_{D}^{\varepsilon}$ describes the \emph{number} of approximate maximally entangled states that can be extracted from the state while $E_D^{\infty}$ is the \emph{rate} at which one can extract the entanglement. 
        Different works in the literature use similar notions or consider a ``rate version'' of $E_{D}^{\varepsilon}$, depending on the context. For our work the definitions above suffice.  
			
	The following lemma mentions well-known results that relate the distillable entanglement to entropic quantities. We include them for later convenience.
			
			\begin{lemma}\label{lem:cond_ent}
				For any bipartite state $\rho_{AB}$ and any $\varepsilon\in[0,1]$, 
				\begin{enumerate}
					\item $E_D^{\infty}(\rho) \geq -H(A|B)_{\rho}$,
					\item $E_D^{\varepsilon}(\rho) \geq -H_{\max}^{\varepsilon}(A|B)_{\rho}$.
			\end{enumerate}	
			\end{lemma}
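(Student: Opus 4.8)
\noindent\emph{Proof proposal.} The plan is to prove the one-shot bound (item~2) and then derive the asymptotic bound (item~1) from it via the asymptotic equipartition property; item~1 is essentially the Devetak--Winter hashing bound, and there is no clean route to it that sidesteps some one-shot statement of this kind.

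\emph{Item~1 from item~2.} Apply item~2 to $\rho_{AB}^{\otimes t}$, obtaining $E_D^{\varepsilon}(\rho^{\otimes t}) \geq -H_{\max}^{\varepsilon}(A|B)_{\rho^{\otimes t}}$ for every $\varepsilon\in(0,1]$ and $t\in\bbN$. Divide by $t$, take $\liminf_{t\to\infty}$, and use the fully quantum AEP \cite{tomamichel2015quantum}, which gives $\lim_{t\to\infty}\tfrac1t H_{\max}^{\varepsilon}(A|B)_{\rho^{\otimes t}}=H(A|B)_{\rho}$ for each fixed $\varepsilon\in(0,1)$; hence $\liminf_{t\to\infty}\tfrac1t E_D^{\varepsilon}(\rho^{\otimes t}) \geq -H(A|B)_{\rho}$ for each such $\varepsilon$, and taking the infimum over $\varepsilon$ as in~\eqref{eq:asym_iid_E_D} yields item~1. (If $H(A|B)_{\rho}\geq 0$ the statement is vacuous since $E_D^{\infty}\geq 0$ always; the content is the entangled regime $H(A|B)_{\rho}<0$.)

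\emph{Item~2.} I would either cite this one-shot ``hashing''/state-merging bound directly from one-shot quantum Shannon theory (see e.g.\ \cite{khatri2020principles}), or reprove it by decoupling. Fix a purification $\ket{\psi}_{ABR}$ of $\rho_{AB}$ and put $m=\lfloor -H_{\max}^{\varepsilon}(A|B)_{\rho}\rfloor$ (up to an additive $O(\log(1/\varepsilon))$ correction that can be absorbed by a mild change of the smoothing parameter). Write $\mH_A\simeq\mH_{A_1}\otimes\mH_{A_2}$ with $\log\dim\mH_{A_2}=m$. The $1$-way LOCC protocol: Alice applies a Haar-random unitary $U$ on $A$ (an approximate unitary $2$-design suffices if efficiency is also wanted), measures $A_1$ in a fixed basis, and transmits to Bob both the outcome $x$ and a description of $U$. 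Using the pure-state duality $H_{\min}^{\varepsilon}(A|R)_{\psi}=-H_{\max}^{\varepsilon}(A|B)_{\psi}$ to rewrite the decoupling condition, the one-shot decoupling theorem guarantees that for this $m$ the residual state between Alice's leftover system $A_2$ and $R$ is $O(\varepsilon)$-close in trace distance to a product $\tfrac{\Id_{A_2}}{2^m}\otimes\rho_R$ (on average over $U$ and $x$). By Uhlmann's theorem Bob, holding $B$ and the classical message, can apply a local isometry $B\to B'$ bringing the state on $A_2 B'$ within $O(\varepsilon)$ of $\Phi^{\otimes m}$; derandomizing $U$ then gives a deterministic $1$-way LOCC protocol with distillation error at most $\varepsilon$, i.e.\ $E_D^{\varepsilon}(\rho_{AB})\geq m$.

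\emph{Main obstacle.} The only real work is item~2, and it is essentially bookkeeping: stating the one-shot decoupling estimate in the exact form needed, identifying $m$ with $-H_{\max}^{\varepsilon}(A|B)$, and tracking how the smoothing parameter and (if used) the $2$-design error combine into the final distillation error, so as to recover the clean inequality $E_D^{\varepsilon}(\rho)\geq -H_{\max}^{\varepsilon}(A|B)_{\rho}$ rather than a version with shifted parameters. Since the lemma is flagged as standard and invoked only later for convenience, the most economical choice is to cite the precise one-shot entanglement-distillation (equivalently, state-merging) statement from \cite{khatri2020principles} and to write out in detail only the AEP reduction for item~1.
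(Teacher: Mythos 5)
The paper gives no proof of this lemma: it simply cites \cite{devetak2005distillation} for item~1 and \cite{berta2009single} for item~2, remarking that both bounds arise from one-way LOCC protocols. Your proposal---citing or reproving the one-shot decoupling/state-merging bound for item~2 and deriving item~1 from it via the fully quantum AEP---is precisely the standard argument underlying those citations, so it is correct and consistent with the paper's (implicit) approach.
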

        The first item of the lemma was first proven in~\cite{devetak2005distillation} and the second follows from~\cite{berta2009single}. In both cases, the bounds arise by considering LOCC maps that only use one-way communication. 
       
		\subsection{Entanglement cost}

            Similarly to entanglement distillation, one can consider the ``opposite'' task, of entanglement ``dilution''. The goal is now to use as few maximally entangled states as possible in order to (approximately) create a given state using local operations and classical communication.
            The measure associated with this operational task is termed  \emph{entanglement cost}.

            Define the \emph{dilution error}, similarly to the distillation error in Equation~\eqref{eq:dist_err}, as
            \begin{equation}\label{eq:dilu_err}
            \bar{p}_{err}\left(\Gamma, \rho \right) = 1 - F\left(\Gamma(\Phi^{\otimes n}),\rho\right) \;.
        \end{equation}
			
   \begin{definition}[One-shot  entanglement cost]
		Let $\varepsilon\in [0,1]$. The one-shot  entanglement cost of a bipartite state $\rho_{AB}\in \mathcal{H}_A \otimes \mathcal{H}_B$ is given by
		\begin{equation}\label{eq:one_shot_E_C}
			E_C^{\varepsilon}(\rho_{AB}) = \inf_{n, \Gamma} \left\{ \; n  \;  \big|   \; 
 \bar{p}_{err}\left(\Gamma, \rho \right) \leq \varepsilon  \right\} \;,
				\end{equation}
				where $\Gamma$ is as in Equation~\eqref{eq:dist_map} and $\bar{p}_{err}$ is given in Equation~\eqref{eq:dilu_err}.
		\end{definition}

   \begin{definition}[Asymptotic IID entanglement cost]
				The asymptotic IID entanglement cost of a bipartite state $\rho\in\mathcal{H}_A \otimes \mathcal{H}_B$ is given by
				\begin{equation}\label{eq:asym_iid_E_D}
					E_C^{\infty}(\rho) = \inf_{\varepsilon\in(0,1]} \limsup_{t\rightarrow\infty}  \tfrac{1}{t} E_{C}^{\varepsilon}\left(\rho_{AB}^{\otimes t} \right) \;.
				\end{equation}
			\end{definition}

	\subsection{Elementary relation between entanglement measures}

   The entanglement cost and distillable entanglement introduced in the previous sections are, in some sense, \emph{extremal}, as they bound any other entanglement measure~\cite{horodecki2000limits,plenio2007introduction}. Another widely used entanglement measure 
is the entanglement entropy.

			\begin{definition}[Entanglement entropy]
				The entanglement entropy $E(\rho)$ of a bipartite state $\rho_{AB}$ is given by $E(\rho)=\max\{H(A)_{\rho},H(B)_\rho\}$.\footnote{Here, $H(A)_\rho$ denotes the von Neumann entropy of the reduced density matrix $\rho_A$. For the case of a pure state $\rho_{AB}$, $H(A)_\rho=H(B)_\rho$.}
			\end{definition}

The next lemma clarifies the relation between the three measures introduced so far. 

			\begin{lemma}\label{lem:etremal_em}
				For all bipartite states, 
				\begin{equation}\label{eq:pure_em_ineq}
					E_C^{\infty}(\rho) \geq E(\rho) \geq E_D^{\infty}(\rho) \;.
				\end{equation}
				Furthermore, $E_C^{\infty}$ and $ E_D^{\infty}$ are the extremal entanglement measures. 
				If $\rho_{AB}=\proj{\psi}_{AB}$ is a \emph{pure} state then all entanglement measures are equal and we have 
				\begin{equation}\label{eq:pure_em_collaps}
					E_C^{\infty}(\rho) = E(\rho) = E_D^{\infty}(\rho) \;.
				\end{equation}
			\end{lemma}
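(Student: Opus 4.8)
The plan is to establish the chain of inequalities $E_C^\infty(\rho) \geq E(\rho) \geq E_D^\infty(\rho)$ in two halves, then collapse everything to equality in the pure-state case. For the first inequality, $E_C^\infty(\rho) \geq E(\rho)$, the key fact is that the von Neumann entropy $H(A)_\rho$ (equivalently the entanglement entropy) cannot increase under LOCC when starting from EPR pairs, in the asymptotic-rate sense; more precisely, the entanglement entropy is an \emph{asymptotically continuous, LOCC-monotone} functional that is normalized so that $E(\Phi^{\otimes n}) = n$. So if $n$ EPR pairs per copy (in the $t\to\infty$ limit) suffice to produce $\rho^{\otimes t}$ up to vanishing error, then by monotonicity and asymptotic continuity $n \geq E(\rho)$; taking the infimum over valid dilution rates gives the bound. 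Symmetrically, for $E(\rho) \geq E_D^\infty(\rho)$, one uses that distillation takes $\rho^{\otimes t}$ to $\Phi^{\otimes m}$ by LOCC with $m/t \to E_D^\infty(\rho)$, so again by the same monotonicity/continuity $E(\rho) \geq \tfrac{1}{t} E(\Phi^{\otimes m}) = m/t$ in the limit. I would cite \cite{horodecki2000limits, plenio2007introduction} for the precise form of the LOCC-monotonicity and asymptotic-continuity statements rather than reprove them, since this lemma is explicitly flagged as recalling well-known facts.

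For the ``extremal'' claim — that $E_C^\infty$ and $E_D^\infty$ sandwich \emph{every} reasonable entanglement measure — the argument is essentially the same two-sided monotonicity argument applied to an arbitrary measure $E'$ satisfying the standard axioms (LOCC monotonicity under asymptotic iid transformations, asymptotic continuity, normalization on EPR pairs): any dilution protocol shows $E'(\rho) \leq E_C^\infty(\rho)$ and any distillation protocol shows $E'(\rho) \geq E_D^\infty(\rho)$. Here I would just point to the literature for the precise axiomatic framework, since pinning down exactly which axioms are needed is a known subtlety and not the point of this preliminary section.

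For the pure-state collapse~\eqref{eq:pure_em_collaps}, it suffices to show $E_D^\infty(\proj{\psi}) \geq E(\psi)$, since combined with~\eqref{eq:pure_em_ineq} this forces all three quantities to coincide. This is exactly the content of the quantum reverse-Shannon / entanglement-concentration result of Bennett–Bernstein–Popescu–Schumacher: for a pure state, one-way LOCC distillation achieves rate $H(A)_\psi$ (Schmidt-coefficient typicality / the method of types on the Schmidt basis), and dually entanglement dilution achieves the same rate, so $E_C^\infty(\psi) \leq H(A)_\psi \leq E_D^\infty(\psi)$. I would invoke that theorem directly. The main obstacle — to the extent there is one in a preliminaries lemma — is being careful that the one-shot definitions~\eqref{eq:one_shot_E_D} and~\eqref{eq:one_shot_E_C} with their $\sup$/$\inf$ and the fidelity-based error criterion actually feed correctly into the asymptotic definitions, so that the cited rate results apply verbatim; this is a matter of matching conventions (fidelity versus trace distance, $\varepsilon$ inside versus outside the limit) rather than new mathematics, and I would dispatch it with a sentence noting the equivalence of error metrics up to the usual Fuchs–van de Graaf bounds.
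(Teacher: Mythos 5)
The paper gives no proof of Lemma~\ref{lem:etremal_em}: it is stated as a recollection of known results, with the extremality claim delegated to the citations \cite{horodecki2000limits,plenio2007introduction} and the pure-state collapse to folklore. Your plan reconstructs the standard argument from exactly those sources --- LOCC monotonicity plus asymptotic continuity for the sandwich and for extremality, and Bennett--Bernstein--Popescu--Schumacher concentration/dilution for \eqref{eq:pure_em_collaps} --- so at the level of approach you are simply supplying the proof the paper chose to omit, and for the pure-state part and the ``furthermore'' clause your sketch is correct.

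There is, however, a genuine gap in how you argue \eqref{eq:pure_em_ineq} for general (mixed) states: both halves rest on the claim that $E(\rho)=\max\{H(A)_\rho,H(B)_\rho\}$ is an LOCC monotone, and this is false. A local depolarizing channel applied to a pure product state raises $H(A)_\rho$ from $0$ to $n$ without creating any entanglement, so $E$ can strictly increase under local operations alone. Consequently your derivation of $E_C^\infty(\rho)\geq E(\rho)$ does not go through, and indeed that inequality fails as literally stated --- the maximally mixed (separable) state has $E_C^\infty=0$ but $E=n$ --- which is an imprecision in the lemma itself that your write-up should flag rather than inherit. The second inequality $E(\rho)\geq E_D^\infty(\rho)$ is still true, but it needs a different route than monotonicity of $E$: for instance $E_D^\infty\leq E_C^\infty\leq E_F(\rho)\leq \min\{H(A)_\rho,H(B)_\rho\}$, the last step by concavity of the von Neumann entropy applied to any pure-state decomposition. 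The clean version of the extremality statement is the monotonicity-plus-asymptotic-continuity sandwich you describe, applied to functionals satisfying the standard axioms; $E$ as defined satisfies them only on pure states, which is exactly where the lemma's real content \eqref{eq:pure_em_collaps} lives and where your BBPS argument is the right one.
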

			
			Lemma~\ref{lem:etremal_em} and Equation~\eqref{eq:pure_em_collaps} in particular give the entanglement entropy its meaning as an entanglement measure. We emphasize, however, that Equation~\eqref{eq:pure_em_collaps} only holds for \emph{pure states}. For the general case of mixed states, the entanglement entropy is not a unique entanglement measure, and both inequalities in Equation~\eqref{eq:pure_em_ineq} can be very far from tight. Because of this, one has to specify which measure of entanglement one considers. The entanglement cost and the distillable entanglement are measures with the most clear operational interpretation and are generally preferred when studying information processing tasks.

\section{Computational entanglement}
\label{sec:comp-ent}

In this section we give definitions of operational entanglement measures in the computational setting. 

As a first step in Section~\ref{sec:bp_comp} we formalize (following standard conventions) the notions of efficient quantum channels, and efficient quantum LOCC maps. In the context of resource theories, these can be interpreted as the ``free operations'' in our computational entanglement theory. 

Next in Section~\ref{sec:comp_ent_meas} we introduce computational analogues of the most fundamental operationally relevant entanglement measures, the distillable entanglement and the entanglement cost. 

Finally in Section~\ref{sec:comp-me} we give a definition of \emph{computationally maximally-entangled states}. These are a sub-class of maximally-entangled states that are in some sense the ``ideal resource'' in  computational entanglement theory. 

    \subsection{Bipartite computational efficiency}\label{sec:bp_comp}

As in the information-theoretic case, described in Section~\ref{sec:em_it} above, we consider LOCC protocols that respect the bipartition of the quantum states. Now, however, the LOCC protocols, or maps, need to be described by efficient quantum circuits. To make sense of this notion, we switch from talking about a single LOCC map $\Gamma$ to a \emph{family} of maps $\{\hat{\Gamma}^{\lambda}\}_{\lambda}$, where $\lambda\in\mathbb{N}_+$ is an integer that, intuitively, parameterizes the complexity of the map (i.e.\ the map will be required to operate on $\poly(\lambda)$ qubits, and perform at most $\poly(\lambda)$ elementary operations).

 Informally, a sequence $\{\Gamma_{\secp}\}_{\secp \in \bbN_+}$ of LOCC channels is \emph{efficient} if there exists a polynomial $c$ such that for all $\secp$, there is a pair of quantum interactive algorithms, each described by quantum circuits of total size at most $c(\secp)$,\footnote{When implemented over some fixed universal gate set, e.g.\ $\{H,CNOT\}$, together with the creation of ancilla qubits in state $\ket{0}$ and measurements in the computational basis.} that implement the map $\Gamma_\secp$. More formally, we give the following (standard) formalization of computational efficiency for such maps. 

\begin{definition}\label{def:efflocc}
   Given an LOCC map $\Gamma$ (Definition~\ref{def:locc-map}), a \emph{circuit description} of $\Gamma$ is given by two families of circuits $\{\mC_{A,i}\}_{i\in\{1,\ldots,r\}}$ and $\{\mC_{B,i}\}_{i\in\{1,\ldots,r\}}$ each acting on $n_A+t_A+c$ and $n_B+t_B+c$ qubits respectively, such that the following procedure implements the map $\Gamma$ on an arbitrary input $\varphi_{AB} \in (\C^2)^{\otimes n_A} \otimes (\C^2)^{\otimes n_B}$: 
				\begin{enumerate}
				\item Registers $A$ and $B$ of $n_A$ and $n_B$ qubits respectively are initialized in state $\varphi_{AB}$. Ancilla registers $A'$ and $B'$ of $t_A$ and $t_B$ qubits respectively are initialized in the $\ket{0}$ state. Communication register $C$ of $c$ bits is initialized in the $\ket{0}$ state. 
				\item For $i=1,\ldots,r$: The circuit $\mC_{A,i}$ is applied on registers $A$, $A'$ and $C$. Register $C$ is measured in the computational basis. The circuit $\mC_{B,i}$ is applied on registers $B$, $B'$ and $C$. Register $C$ is measured in the computational basis. 
				\item The contents of the first $m_A$ qubits of registers $(A,A')$ and the first $m_B$ qubits of $(B,B')$ are the final output. 
    \end{enumerate}
		We say that $\Gamma$ has a circuit description \emph{of size $c$} if it has a circuit description whose total number of gates (including ancilla creation and qubit measurement) is at most $c$. Finally, we say that the family $\{\Gamma_{\secp}\}_{\secp \in \bbN}$ is \emph{efficient} if there exists a polynomial~$c$ such that for all $\secp$, $\Gamma_{\secp}$ has a circuit description of size at most $c(\secp)$. 
				\end{definition}

The case where the number of rounds in Definition~\ref{def:efflocc} is $r=1$ corresponds to the case of a $1$-way LOCC protocol (where the $1$-way communication is from $A$ to $B$). These protocols are the best-studied, as they contain most existing protocols and are also the class for which most lower bounds holds (due to the difficulty of analyzing the general class). We will show such a lower bound in Section~\ref{sec:sep-ec}.

    \begin{remark}
    	Definition~\ref{def:efflocc} is a non-uniform computational complexity measure. It is also possible to provide a uniform definition in which there exists a polynomial time algorithm that takes $1^\secp$ as input and outputs an implementation of $\Gamma_{\secp}$. This is done for example in~\cite{gutoski2007toward} in the context of quantum interactive games. Our results generally apply to both settings; we take the non-uniform definition for convenience.
    \end{remark}

    \subsection{Computational entanglement measures}\label{sec:comp_ent_meas}

Having defined computational efficiency for LOCC maps, we turn to the relevant entanglement measures --- computational entanglement cost and distillable entanglement.
        Before we start, let us make the following remarks. 
        \begin{enumerate}
            \item All of our measures are defined for \emph{mixed} states. 
            \item We have chosen to work with one-shot measures, since they have an operational meaning. Computational asymptotic IID measures can be easily defined from the one-shot measures in the same manner as in the information theoretic case (see Section~\ref{sec:em_it}).
            \item The measures are well-defined on states that cannot themselves be efficiently prepared by a quantum circuit. Later, when defining pseudo-entanglement (Section~\ref{sec:constructions}), for example, we will restrict our attention to efficiently preparable states. 
        \end{enumerate}
				
As already mentioned, since computational efficiency is only meaningful when considered as a function of an \emph{(input) size parameter} $\lambda$, to adapt the conventional definitions of distillable entanglement and entanglement cost from Section~\ref{sec:em_it} to the computational setting we need to consider \emph{families} of LOCC maps $\Gamma$, and quantum states $\rho$, that depend on $\lambda\in\mathbb{N}_+$. Whenever needed for clarity, we make this dependence explicit by writing the state as $\rho^\lambda$ and the map as $\Gamma^\lambda$. In general, $\rho^\lambda$ will be a bipartite quantum state on $n_A(\lambda)+n_B(\lambda)$ qubits, and $\Gamma^\lambda$ will denote an LOCC map from $n_A(\lambda)+n_B(\lambda)$ to $m_A(\lambda)+m_B(\lambda)$ qubits.

\subsubsection{Computational (one-shot) distillable entanglement}

We denote the computational one-shot distillable entanglement of a family of states $\{\rho^\lambda\}_{\lambda\in\N_+}$, defined formally below, by $\hat{E}^{\varepsilon}_D(\{\rho^\lambda\})$. 
In all places in the manuscript, the hat symbol indicates that we are talking about a ``computational measure''.
Because we are considering families of states and LOCC maps, it would be natural to define the distillable entanglement $\hat{E}^{\varepsilon}_D(\{\rho^\lambda\})$ by a function $m:\N_+\to\N$  such that $m(\lambda)$ denotes ``the number of EPR pairs that can be efficiently distilled from $\rho^\lambda$''. However, it is important to realize that for any such function, the value of $m(\lambda)$ at any given point $\lambda=\lambda_*$ does not carry any meaning by itself. This is because, for a fixed $\lambda^*$, the optimal map $\Gamma^{\lambda_*}$, that distills $E_D^\eps(\rho^{\lambda_*})$ EPR pairs from it, has (obviously) a certain fixed circuit size $c_*$, which can always be considered to be a ``polynomial in $\lambda^*$'', as long as e.g.\ the constant coefficient of the polynomial is chosen large enough. For this reason it seems impossible to define \emph{the} computational distillable entanglement of a state $\rho^{\lambda_*}$, even when considered as part of a family $\{\rho^\lambda\}$. The only possibility is to define a \emph{lower bound} on the computational distillable entanglement $\hat{E}^{\varepsilon}_D(\{\rho^\lambda\})$, and only attribute meaning to the values that such a lower bound $m:\N_+\to\N$ takes ``for $\lambda$ large enough.'' (See Remark~\ref{rk:lb} below for further discussion of this point.)

We now formally state our definition. Below the definition, we make some remarks and introduce notation that allows us to compare different lower bounds on the computational distillable entanglement.

	\begin{definition}[Computational one-shot distillable entanglement]\label{def:comp-dist-nonunif}
		Let $\varepsilon:\mathbb{N}_+\to [0,1]$. Fix polynomial functions $n_A,n_B:\mathbb{N}_+\to\mathbb{N}_+$. Let $\{\rho^\lambda_{AB}\}_{\lambda\in\mathbb{N}_+}$ be a family of quantum states such that for each $\lambda\geq 1$, $\rho^\lambda$ is a bipartite state on $n_A(\lambda)+n_B(\lambda)$ qubits. 
	We say that a function $m:\mathbb{N}_+\to\mathbb{N}_+$ is a \emph{valid lower bound on the computational distillable entanglement} of the family $\{\rho^\lambda\}$ (for short, \emph{lower bound on $\hat{E}^{\varepsilon}_D(\{\rho^\lambda\})$}, or $\hat{E}^{\varepsilon}_D(\{\rho^\lambda\}) \ge m$) if there exists an efficient LOCC map family $\{\hat{\Gamma}^\lambda\}$ such that	for each $\lambda\geq 1$, $\hat{\Gamma}^\lambda$ outputs a $2m(\lambda)$-qubit state, and
						\begin{equation}\label{eq:one_shot_E_D}
 \forall \lambda\in\mathbb{N}_+ \,,\quad p_{err}\left(\Gamma^{\lambda}, \rho^{\lambda} \right) \leq \varepsilon(\lambda) \;,
\end{equation}
				where the function $p_{err}$ is defined in Equation~\eqref{eq:dist_err}.
				
				We omit the superscript and simply write $\hat{E}_D(\{\rho^\lambda\}) \ge m$ if there exists a negligible function~$\eps$ such that $\hat{E}^{\eps}_D(\{\rho^\lambda\}) \ge m$.\footnote{A \emph{negligible function} $\eps:\N_+\to\mathbb{R}_+$ is a function that goes to zero faster than any inverse polynomial. Formally, for any polynomial $p$, $p(\lambda)\eps(\lambda)\to_{\lambda\to\infty} 0$.}
		\end{definition}

We note that in the definition, each $\hat{\Gamma}^\lambda$ is allowed to depend on $\lambda$, and hence on $\rho^\lambda$. This in particular means that the LOCC map ``knows'' what state it is attempting to distill from, analogously to the information-theoretic definition of distillable entanglement. 

\begin{remark}\label{rk:lb}
We expand on the remark made before the definition, regarding the significance of a certain function $m=m(\lambda)$ being ``a lower bound on $\hat{E}_D^\eps(\{\rho^\lambda\})$.'' The important point to realize is that the values $m(\lambda)$ only provide meaningful information ``for large enough $\lambda$''. In particular, any specific value $m(1)$, $m(10)$, $m(10^6)$, etc., does not carry any information. 

As already mentioned, this is because by definition, for any given $\lambda_*$, there is an LOCC map $\Gamma^{\lambda_*}$ that distills $E_D(\rho^{\lambda_*})$ EPR pairs from it. This map has some circuit size $c_*$. Now consider the function $c(\lambda) = c_*$. This function is a polynomial (because it is a constant function). Hence the family of LOCC maps $\Gamma^\lambda = \Gamma^{\lambda_*}$ for all $\lambda$ is an efficient LOCC family, which can be considered in Definition~\ref{def:comp-dist-nonunif}. As a consequence, the function $m_*$ such that $m_*(\lambda_*)=E_D(\rho^{\lambda_*})$ and $m_*(\lambda)=0$ for $\lambda\neq \lambda_*$ is always a valid lower bound on $\hat{E}_D$!

More generally, it is always possible to show that a function $m$ such that $m(\lambda) = E_D(\rho^\lambda)$ for all~$\lambda$ up to any given target value such as $\lambda=10,10^6$, etc., is a valid lower bound on $\hat{E}_D$. Therefore, when considering lower bounds on $\hat{E}_D$, it is essential to focus on the \emph{asymptotic behavior as $\lambda\to \infty$}: The value of any such lower bound at any given point is not meaningful; it is only the value ``for large enough $\lambda$'' that has meaning. 
\end{remark}

The preceding remark motivates the following definition, which enables us to compare functions ``for all $\lambda$ large enough.''

\newcommand{\geas}{\sideset{^{\scriptscriptstyle\lim}}{}{\operatorname{\geq}}}

\begin{definition}\label{def:ass-comparison}
Let $m,n:\mathbb{N}_+\to\mathbb{N}$ be arbitrary functions. Then we say that \emph{$m$ is asymptotically less than $n$}, and write $m\lass n$, if there exists a $\lambda_*$ (which may depend on the functions $m,n$) such that for all $\lambda\geq \lambda_*$, $m(\lambda)\leq n(\lambda)$. Similarly, we use the terminology \emph{$m$ is asymptotically greater than $n$}, denoted $m\gass n$, to signify that  there exists a $\lambda_*$ such that for all $\lambda\geq \lambda_*$, $m(\lambda)\geq n(\lambda)$.
\end{definition}

 Later on, we will work with families of states that are constructed from classical keys $k\in\{0,1\}^{\kappa(\lambda)}$, with $\kappa:\mathbb{N}_+\to\mathbb{N}_+$ an arbitrary polynomially-bounded length function. In this situation, there may be many states $\rho^k$ associated with a single size parameter $\lambda$: all those with a key of length $\kappa(\lambda)$. Towards this we make use of the following \emph{uniform} definition of computational distillable entanglement, where the key $k$ is explicitly passed as an input to the LOCC map $\Gamma$. For convenience we use the simplified notation $\Gamma(k,\rho^k)$ to mean that $\Gamma$ acts on the bipartite state $\proj{k}_{A'}\otimes \rho_{AB}^k\otimes \proj{k}_{B'}$, where the bipartition is $AA':BB'$.

	\begin{definition}[{Uniform} computational one-shot distillable entanglement]\label{def:comp-dist-unif}
			Let $\varepsilon:\mathbb{N}_+\to [0,1]$. Fix polynomial functions $n_A,n_B,\kappa:\mathbb{N}_+\to\mathbb{N}_+$. Let $\{\rho^{k}\}_{k\in\{0,1\}^{\kappa(\lambda)}}$ be a family of quantum states such that for each $\lambda\geq 1$ and $k\in\{0,1\}^{\kappa(\lambda)}$, $\rho^k$ is a bipartite state on $n_A(\lambda)+n_B(\lambda)$ qubits. 
				We say that a function $m:\mathbb{N}_+\to\mathbb{N}_+$ is a \emph{valid lower bound on the computational distillable entanglement} of the family $\{k,\rho^k\}$ (for short, \emph{lower bound on $\hat{E}^{\varepsilon}_D(\{k,\rho^k\})$} or $\hat{E}^{\varepsilon}_D(\{k,\rho^k\}) \ge m$) if there exists an efficient LOCC map family $\{\hat{\Gamma}^\lambda\}$ such that	for each $\lambda\geq 1$, $\hat{\Gamma}^\lambda$ outputs a $2m(\lambda)$-qubit state, and
				\begin{equation}\label{eq:one_shot_E_D_unif}
 \forall \lambda\in\mathbb{N}_+ \,,\forall k\in\{0,1\}^{\kappa(\lambda)}\,,\quad p_{err}\left(\hat{\Gamma}^{\lambda},k, \rho^{k}_{AB} \right) \leq \varepsilon(\lambda) \;,
\end{equation}
			 where
                \begin{equation}\label{eq:dist_err_key}
                   p_{err}\left(\Gamma, k, \rho \right) = 1 - F\left(\Gamma(k,\rho),\Phi^{\otimes m}\right) = 1 - \bra{\Phi^{\otimes m}} \Gamma(k,\rho) \ket{\Phi^{\otimes m}} \;.\footnote{This is  a direct extension of the definition of $p_{err}$ given in Equation~\eqref{eq:dist_err} to the setting in which the distillation map also gets the key as input.}
             \end{equation}             
             We omit the superscript and simply write $\hat{E}_D(\{k,\rho^k\}) \ge m$ if there exists a negligible function $\eps$ such that $\hat{E}^{\varepsilon}_D(\{k,\rho^k\}) \ge m$.
		\end{definition}	
		
		We use the notation $\hat{E}_D^\varepsilon$ for both definitions, Definition~\ref{def:comp-dist-nonunif} and Definition~\ref{def:comp-dist-unif}. It will always be clear from context which definition applies (indeed, this is clear from the form of the input---whether the family is written explicitly or not). We refer to the quantity in Definition~\ref{def:comp-dist-unif} as ``uniform'' because the LOCC map is required to successfully distill from states associated with \emph{all possible} keys (given the key as input). 
		
		In fact, the measure in Definition~\ref{def:comp-dist-nonunif} is easily seen to be a special case of the measure in Definition~\ref{def:comp-dist-unif}; but we separate them for clarity. A setting in which the difference between the two definitions is blurred is when we have a family of states $\{k,\rho_{AB}^k\}$ in mind, but want to consider the computational distillable entanglement associated with a \emph{single} state $\rho^k$ in the family (for each~$\lambda$). In this case, we fix a function $k=k(\lambda)\in \{0,1\}^{\kappa(\lambda)}$ and consider that $m:\mathbb{N}_+\to\mathbb{N}_+$ is a {valid lower bound on the computational distillable entanglement} of the family $\{k,\rho^k\}$ if there exists an efficient LOCC map family $\{\hat{\Gamma}^\lambda\}$ such that	for each $\lambda\geq 1$, $\hat{\Gamma}^\lambda$ outputs a $2m(\lambda)$-qubit state, and
				\begin{equation}\label{eq:one_shot_E_D_bis}
 \forall \lambda\in\mathbb{N}_+ \,,\quad p_{err}\left(\hat{\Gamma}^{\lambda}, \lambda,\rho^{k(\lambda)}_{AB} \right) \leq \varepsilon(\lambda) \;.
\end{equation}
Because in Equation~\eqref{eq:one_shot_E_D_bis} the LOCC family $\hat{\Gamma}$ is only required to distill from a single state $\rho^{k(\lambda)}$, in general a lower bound $m$ on $\hat{E}_D^\eps(\{k,\rho_{AB}^k\}_k)$ is also a lower bound on $\hat{E}_D^\eps(k,\rho_{AB}^k)$, but the converse may not hold. 

\subsubsection{Computational (one-shot) entanglement cost}
In the previous section we discussed the distillable entanglement, which is defined via entanglement distillation protocols. 
We now move to the ``complementary'' task of entanglement dilution --- we start with maximally entangled states and want to create a bipartite state $\rho$ using LOCC protocols. The entanglement cost of $\rho$ is defined via the smallest number of initial EPR pairs needed to create it using such dilution protocols. 
We now consider the computational one-shot entanglement cost, that we denote by $\hat{E}^{\varepsilon}_C$.
Keeping the previous section in mind, $\hat{E}^{\varepsilon}_C$ will be defined via functions $n$ that act as \emph{upper bounds} on $\hat{E}^{\varepsilon}_C$ (while for $\hat{E}^{\varepsilon}_D$ we looked at lower bounds $m$).

The computational one-shot entanglement cost, and its uniform variant, are defined as follows. We first give  the definition for single-state families. 

	\begin{definition}[Computational one-shot entanglement cost]\label{def:comp-cost}
		Let $\varepsilon:\mathbb{N}_+\to [0,1]$. Fix polynomial functions $n_A,n_B:\mathbb{N}_+\to\mathbb{N}_+$. Let $\{\rho^\lambda_{AB}\}_{\lambda\in\mathbb{N}_+}$ be a family of quantum states such that for each $\lambda\geq 1$, $\rho^\lambda$ is a bipartite state on $m_A(\lambda)+m_B(\lambda)$ qubits.  
		We say that a function $n:\mathbb{N}_+\to\mathbb{N}_+$ is a \emph{valid upper bound on the computational  entanglement cost} of the family $\{\rho^\lambda\}$ (for short, \emph{upper bound on $\hat{E}^{\varepsilon}_C(\{\rho^\lambda\})$}, or simply $\hat{E}^{\varepsilon}_C(\{\rho^\lambda\}) \le n$) if there exists an efficient LOCC map family $\{\hat{\Gamma}^\lambda\}$ such that	for each $\lambda\geq 1$, $\hat{\Gamma}^\lambda$ takes as input $n(\lambda)$ EPR pairs, and
			\begin{equation}\label{eq:one_shot_E_C}
 \forall \lambda\in\mathbb{N}_+ \,,\quad \bar{p}_{err}\left(\hat{\Gamma}^{\lambda}, \rho^{\lambda}_{AB} \right) \leq \varepsilon(\lambda) \;,
				\end{equation}
				where 
                \begin{equation}\label{eq:dilu_err_key}
                   \bar{p}_{err}\left(\Gamma, \rho \right) = 1 - F\left(\Gamma(\Phi^{\otimes n}), \rho \right)\;.
             \end{equation}
				The superscript in $\hat{E}_C^\eps$ is omitted when $\eps$ is a negligible function of $\lambda$.
	\end{definition}

Next we give the general definition. 

  \begin{definition}[Uniform computational one-shot entanglement cost]\label{def:comp-cost-unif}
		Let $\varepsilon:\mathbb{N}_+\to [0,1]$. Fix polynomial functions $n_A,n_B,\kappa:\mathbb{N}_+\to\mathbb{N}_+$. Let $\{\rho^{k}\}_{k\in\{0,1\}^{\kappa(\lambda)}}$ be a family of quantum states such that for each $\lambda\geq 1$ and $k\in\{0,1\}^{\kappa(\lambda)}$, $\rho^k$ is a bipartite state on $m_A(\lambda)+m_B(\lambda)$ qubits. 
			We say that a function $n:\mathbb{N}_+\to\mathbb{N}_+$ is a \emph{valid upper bound on the computational  entanglement cost} of the family $\{k,\rho^k\}$ (for short, \emph{upper bound on $\hat{E}^{\eps}_C(\{k,\rho^k\})$}, or $\hat{E}^{\eps}_C(\{k,\rho^k\}) \le n$) if there exists an efficient LOCC map family $\{\hat{\Gamma}^\lambda\}$ such that	for each $\lambda\geq 1$, $\hat{\Gamma}^\lambda$ takes as input $n(\lambda)$ EPR pairs, and
			\begin{equation}\label{eq:one_shot_E_C_unif}
 \forall \lambda\in\mathbb{N}_+ \,, \forall k\in\{0,1\}^{\kappa(\lambda)}\;,\quad \bar{p}_{err}\left(\hat{\Gamma}^{\lambda},k, \rho^{k}_{AB} \right) \leq \varepsilon(\lambda) \;,
				\end{equation}
		where
		 \begin{equation}\label{eq:dilu_err_key_unif}
                   \bar{p}_{err}\left(\Gamma, k, \rho \right) = 1 - F\left(\Gamma(k,\Phi^{\otimes n}), \rho \right)\;.
				\end{equation}		
				The superscript in $\hat{E}_C^\eps$ is omitted when $\eps$ is a negligible function of $\lambda$.
	\end{definition}

\subsubsection{Relation between entanglement measures}

Having introduced all the definitions we need, we end by stating simple relations between them. For simplicity we formulate the relations for the case of a family of pure states $\{k, \rho^k\}_{k\in\{0,1\}^\kappa}$, implicitly parametrized by $\lambda\geq 1$. 

Firstly, we consider relations between entanglement measures and their computational analogue. Because efficient LOCC maps are a subclass of LOCC maps, it holds that any valid lower bound $m$ on $\hat{E}_D^\eps(\{k,\rho_{AB}^k\})$ must satisfy that $m(\lambda)\leq E_D^\eps(k,\rho^k)$ for any $\lambda$ and $k\in\{0,1\}^{\kappa(\lambda)}$. Similarly, any valid upper bound $n$ on $\hat{E}_C^\eps(\{k,\rho^k\})$ must satisfy that $n(\lambda)\geq E_C^\eps(k,\rho^k)$. In the other direction, $\hat{E}_D$ is always non-negative, whereas $\hat{E}_C$ can be unbounded, e.g.\ when the family $\{(k,\rho^k)\}$ is not efficiently preparable, even locally. Stated succinctly, and with slight abuse of notation, we have that for any family,
\[ 0 \,\leq\, \hat{E}_D^\eps \,\leq\, E_D^\eps\qquad\text{and}\qquad E_C^\eps\,\leq\, \hat{E}_C^\eps \,\leq\, +\infty\;. \]
As we will see in the next section (Lemma~\ref{lem:counting-separation}), all inequalities can be strict. In particular, there exists families of maximally entangled states, i.e.\ $2n$-qubit states such that $E_D^0=n$, which have (asymptotically) zero \emph{computational} distillable entanglement. (Of course, there also exist families of maximally entangled states that have \emph{maximal} computational distillable entanglement as well --- we refer to these as ``computational'' maximally-entangled states, see Section~\ref{sec:comp-me}.)

Next we consider relations between computational entanglement measures. Since LOCC maps cannot create entanglement, it always holds that $E_C^0(\rho)\geq E_D^0(\rho)$.\footnote{Using continuity bounds for entanglement the inequality extends, with a small additive loss, to the case of $\eps>0$. For simplicity we consider $\eps=0$.} Using the bounds from the previous paragraph it then follows that for any valid lower bound $m$ on $\hat{E}_D^0(\{k,\rho^k\})$ and any valid upper bound $n$ on $\hat{E}_C^0(\{k,\rho^k\})$, $m(\lambda)\leq n(\lambda)$ for all $\lambda$. Succinctly, 
\begin{equation*}
    \hat{E}_D^0 \,\leq\, \hat{E}_C^0 \;,
\end{equation*}  
as intuitively expected.

\subsection{Computational maximally-entangled states}
\label{sec:comp-me}

We revisit the definition of a maximally-entangled state, introducing a notion that fits the computational setting. 

\begin{definition}[Computational maximally-entangled states family]
	Let $n:\N_+\to\N$ be a polynomially-bounded function. A family of $2n$-qubit  bipartite quantum states $\{\phi^k_{AB}\}_k$, indexed by keys $k\in\{0,1\}^{\kappa(\lambda)}$, is said to be \emph{computational maximally-entangled} if the function $n$ is a valid lower bound on the computational distillable entanglement  of the family $\{k,\phi^k\}$ (recall Definition~\ref{def:comp-dist-unif}).%
\end{definition}

We note that in the definition, the fact that $n(\lambda)$ is a lower bound on $\hat{E}_D^\eps(\{k,\rho^k\})$ implies (for~$\eps$ small enough) that the initial $2n$-qubit state $\rho^k$ is maximally entangled by itself (for every $k$), thereby justifying the terminology ``maximally entangled.'' The ``computational'' qualifier refers to the fact that this entanglement is \emph{accessible}, in the sense of it being distillable into EPR pairs using the key $k$. This requirement is the condition that justifies the term ``computational'' maximally-entangled states to refer to such states. 
We remark that we could also require the states $\rho^k$ to be efficiently preparable (given $k$). We do not do this explicitly in the definition; however it will be the case for our constructions. 

The most trivial computational maximally-entangled states family is simply $\{\Phi^{\otimes n}\}$. 
This trivial family of one state will not be useful for us later. 
Note  also that the Haar random family of states is \emph{not} a computational maximally-entangled states family. This is because, even though a Haar-random state on $2n$ qubits is (close to) maximally entangled with high probability~\cite{page1993average,sen1996average}, this entanglement is not accessible to any efficient distillation protocol (this is easily shown by e.g. an adaptation of the counting argument in Lemma~\ref{lem:counting-separation}). 
We present constructions of computational maximally-entangled states later in this section, see Section~\ref{sec:constructions}.

\section{Separations between computational and information-theoretic entanglement measures}\label{sec:seperation}

\newcommand{\enc}{\mathsf{Enc}}
\newcommand{\dec}{\mathsf{Dec}}

In this section we establish separations between computational and non-computational entanglement measures. We start in Section~\ref{sec:sep-exists} with a lemma that shows that the largest possible separations imaginable can indeed take place, even for \emph{pure} states. This argument, however, is non-constructive, and relies on states that are not efficiently preparable. In Section~\ref{sec:sep-crypto} we give an analogous separation for the case of the distillable entanglement, for efficiently preparable states, under standard post-quantum cryptographic assumptions. Finally in Section~\ref{sec:sep-ec} we give a separation for the entanglement cost for efficiently preparable states, in the quantum random oracle model. 

\subsection{Existential separations}
\label{sec:sep-exists}

The following lemma shows that there exist families of pure states $\{\rho^\lambda\}$ that have high (indeed, maximal) distillable entanglement but (asymptotically) zero computation distillable entanglement; and that there exist families of pure states $\{\sigma^\lambda\}$ that have zero entanglement cost (indeed, they are product states) and yet their computational entanglement cost is very large, larger than any polynomial. 

\begin{lemma}\label{lem:counting-separation}
For any small enough $\eps>0$ and non-decreasing $n(\lambda)$ such that $n(\lambda)\to_{\lambda\to\infty}\infty$ there exists a family $\{\rho_{AB}^\lambda\}$ of bipartite {pure} states on $2n(\lambda)$ qubits such that for all $\lambda$, $E_D^{\varepsilon}(\rho^\lambda)  = E(\rho^\lambda) = n(\lambda)$, but for any valid lower bound $m$ on $\hat{E}_D^\eps(\rho^\lambda)$, we have that $m\lass 0$.\footnote{Recall that according to Definition~\ref{def:ass-comparison}, this means that $m(\lambda)=0$ for all large enough $\lambda$.}

Similarly, there exists a family $\{\sigma_{AB}^\lambda\}$ of bipartite pure states such that $E_C(\sigma^\lambda)=0$ for all $\lambda$ and yet for any valid upper bound $n$ on $\hat{E}_C^\eps(\sigma^\lambda)$, it holds that for any polynomially bounded function $q(\lambda)$, we have that $n\gass q$.\footnote{This condition means that $n$ must grow faster than any polynomial, as $\lambda\to\infty$.}
        \end{lemma}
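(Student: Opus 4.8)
The plan is to handle the two parts by counting arguments. For the distillable entanglement part, fix $n = n(\lambda)$ and let $d = 2^n$. Consider the set of all efficient one-way (in fact all efficient) LOCC maps $\hat\Gamma^\lambda$ of circuit size at most $c(\lambda)$, for a fixed polynomial $c$. Since each such circuit is described by a bounded number of gates over a fixed universal gate set acting on $\poly(\lambda)$ qubits, the number of such maps is at most $2^{\poly(\lambda)}$. Now for a \emph{fixed} efficient LOCC family, the set of bipartite pure states $\rho^\lambda_{AB}$ on $2n$ qubits from which $\hat\Gamma^\lambda$ successfully distills even a single EPR pair (i.e.\ $m(\lambda)\ge 1$) with error at most $\eps$ forms a ``small'' subset of the sphere of pure states: roughly, the map must send $\rho^\lambda$ to within fidelity $1-\eps$ of $\Phi$, which is a codimension condition pinning $\rho^\lambda$ near a low-dimensional subvariety; a standard net/volume argument shows this subset has measure at most $\exp(-\Omega(d))$ on the unit sphere of $\C^d\otimes\C^d$ once we also require $\rho^\lambda$ to have full Schmidt rank $d$ (maximal entanglement). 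Taking a union bound over the $2^{\poly(\lambda)}$ possible efficient maps, the set of ``bad'' maximally entangled states — those distillable by \emph{some} efficient LOCC map — still has vanishing measure, since $\exp(\poly(\lambda))\cdot\exp(-\Omega(2^{n(\lambda)})) \to 0$ as $\lambda\to\infty$ (using $n(\lambda)\to\infty$). Hence for each large enough $\lambda$ there exists a maximally entangled pure state $\rho^\lambda$ on $2n(\lambda)$ qubits for which no efficient LOCC family achieves $m(\lambda)\ge 1$; for the finitely many small $\lambda$ we may pick $\rho^\lambda$ arbitrarily (maximally entangled), which does not affect the asymptotic statement $m\lass 0$. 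Since $\rho^\lambda$ is pure and maximally entangled, $E_D^\eps(\rho^\lambda) = E(\rho^\lambda) = n(\lambda)$ by Lemma~\ref{lem:etremal_em}.

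For the entanglement cost part, a similar counting argument applies but now we want product states (zero information-theoretic cost) that are expensive to prepare via efficient LOCC. Here the point is the standard fact that most states are hard to \emph{prepare} by bounded-size circuits, transported to the LOCC setting. Fix a polynomial bound $q(\lambda)$; we want to show that no efficient LOCC family using at most $q(\lambda)$ EPR pairs can prepare $\sigma^\lambda$ up to error $\eps$. An efficient LOCC family with circuit size $c(\lambda)$, fed $q(\lambda)$ EPR pairs, produces — ranging over all such maps — at most $2^{\poly(\lambda,q)}=2^{\poly(\lambda)}$ distinct output states, up to some $\eps$-net granularity; equivalently, the set of bipartite states $\eps$-approximable in this way is contained in a union of $2^{\poly(\lambda)}$ balls of radius $\eps$. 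On the other hand, the manifold of \emph{product} pure states $\psi_A^\lambda\otimes\psi_B^\lambda$ on $n_A+n_B$ qubits, with $n_A,n_B\to\infty$, has a large packing number: it contains $\exp(\Omega(2^{n_A}+2^{n_B}))$ points pairwise $\eps$-far apart. Choosing $n_A(\lambda),n_B(\lambda)$ growing fast enough (say $n_A = n_B = \lceil\log_2 q(\lambda)\rceil+\omega(1)$ suffices to beat any single polynomial, but to beat \emph{all} polynomials simultaneously we let $n_A,n_B$ grow like $\lambda$, or indeed like $\log^2\lambda$), for $\lambda$ large enough there is a product state $\sigma^\lambda$ not $\eps$-close to any state preparable by an efficient LOCC family from $q(\lambda)$ EPR pairs. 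A diagonalization over the countably many polynomials $q$ then yields a single family $\{\sigma^\lambda\}$ such that for every polynomial $q$, $n\gass q$ for any valid upper bound $n$: concretely, enumerate polynomials $q_1,q_2,\ldots$ and arrange that $\sigma^\lambda$ defeats $q_i$ for all $\lambda\ge\lambda_i$, which is possible since the relevant bad event has probability going to $0$ faster than we need. Since $\sigma^\lambda$ is a product state, $E_C(\sigma^\lambda)=E_C^\eps(\sigma^\lambda)=0$.

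A few technical points need care. First, to make ``the number of efficient LOCC maps is at most $2^{\poly(\lambda)}$'' precise one works with an $\eps$-net over gate parameters if the gate set is continuous, or directly counts circuits if the gate set is discrete (as in the paper's footnote, $\{H,CNOT\}$ plus ancillas and measurements, which is discrete up to the number of gates) — so the count is genuinely finite and bounded by $2^{O(c(\lambda)\log\poly(\lambda))}$. Second, measurements make the channel $\hat\Gamma^\lambda$ non-unitary, so ``the set of outputs'' is a set of mixed states indexed also by measurement outcomes; but the fidelity criterion $p_{err}\le\eps$ is an average over outcomes, so this only helps concentration and the volume bounds go through (one can purify and absorb outcomes into the circuit description). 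Third, the measure concentration estimate — that the set of maximally entangled states distillable by a fixed map has exponentially small Haar measure — should be stated cleanly; the cleanest route is: if $\hat\Gamma^\lambda(\rho^\lambda)$ has fidelity $\ge 1-\eps$ with $\Phi$, then in particular $\rho^\lambda$ lies in the preimage, and since $\hat\Gamma^\lambda$ is a fixed CPTP map the preimage of a fidelity ball is a fixed convex-ish region whose intersection with the set of maximally entangled states (a low-dimensional submanifold, the orbit $U\otimes \bar U\cdot\Phi$, or more simply: use that a random maximally entangled state is a Haar-random unitary applied to a fixed one, and apply Levy's lemma to the Lipschitz function $\rho\mapsto F(\hat\Gamma^\lambda(\rho),\Phi)$, whose mean is tiny because a single fixed EPR pair cannot be extracted from a Haar-random maximally entangled state on average). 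I expect this concentration step — getting a clean, correct exponential bound on the measure of states from which a \emph{fixed} efficient LOCC map distills an EPR pair — to be the main obstacle; everything else is union bounds and diagonalization. The natural tool is Levy's lemma applied to a $1$-Lipschitz function on the unitary group (parametrizing maximally entangled states), combined with a separate argument that its expectation is $O(2^{-n})$ or so.
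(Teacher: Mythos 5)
Your overall strategy is the same as the paper's: both parts are counting arguments that play the $2^{\poly(\lambda)}$ efficient LOCC maps off against the doubly-exponential ($\exp(\Omega(2^{2n}))$) richness of the relevant set of states, and for the distillation part both proofs hinge on the same key observation --- averaging a fixed LOCC map over all maximally entangled states amounts, by linearity, to evaluating it on the maximally mixed (hence separable) input, from which no EPR pair can be obtained with fidelity above $1/2$. The paper implements this with a discrete $\eta$-net of one-time-padded maximally entangled states plus a pigeonhole/case analysis; you implement it with Haar measure on $U(2^n)$ and L\'evy's lemma. Your route is arguably cleaner at the step you worry most about: the mean of $U\mapsto F\big(\hat\Gamma((\Id\otimes U)\Phi^{n}(\Id\otimes U)^\dagger),\Phi\big)$ is exactly $F(\hat\Gamma(2^{-2n}\Id),\Phi)\le 1/2$ --- not $O(2^{-n})$ as you guess, but $1/2$ suffices for $\eps<1/2$ --- and the Lipschitz constant in the Hilbert--Schmidt metric is $O(2^{-n/2})$, giving deviation probability $\exp(-\Omega(2^{2n}))$. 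So that step is not the obstacle.

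There are two genuine gaps. First, your union bound $2^{\poly(\lambda)}\cdot\exp(-\Omega(2^{2n(\lambda)}))\to 0$ does \emph{not} follow from $n(\lambda)\to\infty$ alone: for $n(\lambda)=\log\log\lambda$, say, the second factor is only quasipolynomially small and the bound fails. Worse, in that regime the conclusion itself is false --- any $n$-qubit unitary can be implemented to accuracy $2^{-\lambda}$ by a circuit of size $\poly(4^{n},\lambda)$, so for $n(\lambda)=O(\log\lambda)$ \emph{every} maximally entangled pure state is efficiently distillable (Bob applies $U^{-1}$ locally). You must assume $n(\lambda)=\omega(\log\lambda)$; the paper's own proof quietly does so by taking $n$ to be a polynomial function of $\lambda$. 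Second, for the distillation part you fix a single polynomial $c$ bounding the circuit size and exclude all maps of size at most $c(\lambda)$, but a ``valid lower bound'' may be witnessed by an efficient family of \emph{any} polynomial size; you need to diagonalize over the countably many polynomials (as you explicitly do for the cost part), or more simply run the union bound against all maps of some fixed superpolynomial size such as $\lambda^{\log\lambda}$, which still loses to $\exp(\Omega(2^{2n}))$ once $n=\omega(\log\lambda)$. Both gaps are fixable without changing the architecture of your argument.
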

 
        \begin{proof}
				The proof for the entanglement cost follows from a simple counting argument. The argument for distillable entanglement is also by a counting argument, but it is a bit more subtle. We start with the distillable entanglement. 
				
				First we note that for any polynomial function $n(\lambda)$ %
				and $\eta>0$ there are at least $(1/\eta)^{\Omega(2^{2n(\lambda)})}$ maximally entangled pure states on $2n(\lambda)$ qubits whose pairwise fidelity is less than $1-\eta$. This follows by observing that for $n$-qubit unitaries $U$ and $V$, and $\ket{\Phi^n}$ the tensor product of $n$ EPR pairs, the states $(\Id \otimes U)\ket{\Phi^n}$ and $(\Id \otimes V)\ket{\Phi^n}$ are both maximally entangled and yet
				\[ \Re\big( \bra{\Phi^{n}} (\Id \otimes U^\dagger)(\Id \otimes V)\ket{\Phi^n}\big)\,=\, 1-\frac{1}{2} \frac{1}{2^{n}}\|U-V\|_F^2\;,\]
				i.e.\ any two such states have very low fidelity, unless the Frobenius norm $\|U-V\|_F^2$ is very small, where $\|X\|_F^2 = \tr(XX^\dagger)$. We then use that an $\eta$-net on $n(\lambda)$-qubit unitaries for the normalized Frobenius norm has size  $(1/\eta)^{\Omega(2^{2n(\lambda)})}$. 
				
				Concretely, we construct a set $S_\lambda$ iteratively as follows. Initially, $S_\lambda=\emptyset$. Let $U$ be any $n$-qubit unitary such that $\ket{\psi}=(I\otimes U)\ket{\Phi^n}$  has fidelity less than $1-\eta$ with every state in $S_\lambda$. If such an $U$ exists, then add the $2^{2n}$ pure states $\{ (I\otimes \sigma_X(a)\sigma_Z(b) )\ket{\psi}:\; a,b\in\{0,1\}^n\}$ to $S_\lambda$, where we applied all possible one-time pads on the second half of $\ket{\psi}$. Observe that (i) by construction, any pair of states in $S_\lambda$ has fidelity at most $1-\eta$, (ii) $S_\lambda$ has at least $(1/\eta)^{\Omega(2^{2n(\lambda)})}$ states, and (iii) the uniform mixture over all $\ket{\psi}\in S_\lambda$ is the totally mixed state. Moreover, for any $\rho^\lambda\in S_\lambda$, by construction $E(\rho^\lambda)=n(\lambda)$. 

For each $\rho \in S_\lambda$, let $s(\rho)$ be the smallest size of an LOCC map $\Gamma$ that distills one EPR pair from $\rho$, with error $p_{err}(\Gamma,\rho)\leq \eps$. Let $s(\lambda)=\max_{\rho\in S_\lambda} s(\rho)$. We distinguish two cases. 

Firstly, suppose that $s$ grows faster than any polynomial. This means that there exists a sequence of states $\{\rho^\lambda\}$ such that $\rho^\lambda \in S_\lambda$ for all $\lambda$, and there does not exist any polynomial $p$ such that there would exist a family of LOCC maps $\{\Gamma^\lambda\}$ that distill one EPR pair from $\rho^\lambda$, for any $\lambda$ large enough. Indeed, for this take $\rho^\lambda$ a state that maximizes $s(\rho)$ for $\rho\in S_\lambda$. Then this family $\{\rho^\lambda\}$ establishes the claim made in the lemma. 

Secondly, suppose that $s$ is polynomially bounded. The number of LOCC maps from $2n(\lambda)$ qubits to $2$ qubits of size at most $s(\lambda)$ is at most $2^{\poly(s(\lambda))}$ (where $\poly(\lambda)$ denotes some polynomial function of $\lambda$). This is because any such map can be described using a number of bits that is (at most) polynomial in its size. By the pigeonhole principle there is a $\lambda_*$ such that for all $\lambda \geq \lambda_*$, there is a unique LOCC map $\hat{\Gamma}^\lambda$ of size at most $s(\lambda)$ that distills at least one EPR pair from ``almost all'' states in $S_\lambda$. Quantitatively, for each $\lambda\geq \lambda_*$, for all but a fraction $(1/\eta)^{-\Omega(2^{2n(\lambda)})}$ of states in $S_\lambda$, $\hat{\Gamma}^\lambda(\psi)$ has fidelity at least $1-\eps$ with one EPR pair.

Now let $\rho^\lambda$ be the uniform mixture over all $\psi\in S_\lambda$. Then provided $\eta$ is small enough with respect to $\eps$ it still follows that $\hat{\Gamma}^\lambda(\rho^\lambda)$ has fidelity at least $1-2\eps$ with one EPR pair. However, by property (iii) above the uniform mixture over all $\ket{\psi}\in S_\lambda$ is the totally mixed state. In particular, this mixture has entanglement entropy $0$. For $\eps$ small enough, we have obtained a contradiction. 

The ``similarly'' part of the lemma follows even more directly. Any given family of LOCC maps $\{\hat{\Gamma}^\lambda\}$ using a polynomial number $n(\lambda)$ of EPR pairs as input can only generate one given state $\psi^\lambda$ for each $\lambda$. Therefore, as long as the number of states in $S_\lambda$ is larger than the number of LOCC maps, there is a family of states $\rho^\lambda\in S_\lambda$ such that for all large enough $\lambda$, $\rho^\lambda$ is far from the state generated by any efficient $\hat{\Gamma}^\lambda$. We take a collection $S_\lambda$ that contains $(1/\eta)^{\Omega(2^{n(\lambda)})}$ product states, which guarantees that $E^\eps_C(\rho^\lambda)=0$, and yet for any valid upper bound $n$ on $\hat{E}_C^\varepsilon(\rho^\lambda)$, it must hold that $n(\lambda)=0$ for large enough $\lambda$. 
        \end{proof}

The preceding lemma gives arbitrarily large separations between $E_D$ and $\hat{E}_D$, and between $E_C$ and $\hat{E}_C$. However, a limitation of the separation is that it is only existential. In particular, the states $\{\rho^\lambda\}$ achieving the separation are not efficient, in the sense that they do not have efficient circuits (even without putting the LOCC restriction). This is why it is possible for $\hat{E}_C$ to be superpolynomial; indeed, for any efficiently constructible family of states $\{\rho^\lambda\}$ on $2n(\lambda)$ qubits we will have $\hat{E}_C(\lambda;\rho^\lambda)\leq 2n(\lambda)$ (by teleportation).

\subsection{A cryptographic separation for distillable entanglement (explicit)}
\label{sec:sep-crypto}

To obtain a separation between computational and information-theoretic distillable entanglement for an explicit family of states, and in particular one that is efficient to generate, we leverage computational assumptions. Specifically, we use \emph{non-interactive statistically binding commitment schemes}, which can be instantiated from post-quantum one-way functions (or even from milder assumptions, e.g.\ \emph{EFI pairs}, see~\cite{Yan22,brakerski2023computational}).

\begin{definition}[Non-Interactive Statistically Binding Commitments for Quantum Messages (NISBQ)]
	A NISBQ scheme consists of a family of efficient channels $C_{\secp,n}$ over $n$-qubit input registers, that are computable by circuits of complexity $\poly(\secp,n)$, with the following properties. 
	\begin{enumerate}
		\item There exists a (not necessarily efficient) approximate inverse channel $D$ such that the composition $D \circ C$ is within negligible distance, in the diamond norm, from the identity channel over $n$ qubits.\footnote{The diamond norm is a measure of distance between quantum channels that, informally, generalizes the trace norm for quantums states. See e.g.~\cite{watrous2018theory}.}
		
		\item The channel $C^0_{\secp,n}$ that takes an $n$-qubit input, traces out that input, and then applies $C_{\secp, n}$ on input $\ket{0^n}$ is computationally indistinguishable from $C_{\secp, n}$ for all $n = \poly(\secp)$. That is, any efficient quantum algorithm that is given black-box evaluation access to either of these channels cannot distinguish between them except with negligible advantage.
	\end{enumerate}
\end{definition}

\begin{proposition}
	NISBQ exist if there exist post-quantum one-way functions.
\end{proposition}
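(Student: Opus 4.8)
The plan is to instantiate the NISBQ primitive from a post-quantum one-way function by composing the standard commitment-based encryption of classical data with a quantum one-time pad, in the spirit of hybrid (quantum-classical) encryption. Concretely, to commit to an $n$-qubit register, I would first apply a quantum one-time pad: sample classical strings $a,b\in\{0,1\}^n$ uniformly, apply $\bigotimes_i \sigma_X(a_i)\sigma_Z(b_i)$ to the input, and then commit to the $2n$-bit classical key $(a,b)$ using a post-quantum non-interactive statistically binding commitment for \emph{classical} messages. The channel $C_{\secp,n}$ outputs the one-time-padded quantum register together with the classical commitment register. The (inefficient) inverse $D$ extracts $(a,b)$ from the commitment (possible up to negligible error by statistical binding, which pins down a unique committed value that can be found by brute force), and then undoes the one-time pad; hence $D\circ C$ is negligibly close to the identity in diamond norm, giving property~1.

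For property~2, the key point is that the quantum one-time pad perfectly randomizes the $n$-qubit register: averaged over $(a,b)$, the output quantum register is the maximally mixed state regardless of the input, \emph{and} this holds jointly with any reference system, i.e.\ the Pauli twirl maps any state to $\Id/2^n$ on the tagged subsystem. Therefore, conditioned on the value of $(a,b)$ being hidden, an adversary with black-box access to $C_{\secp,n}$ sees only: (i) a maximally mixed $n$-qubit register, and (ii) a commitment to $(a,b)$. The channel $C^0_{\secp,n}$ produces exactly the same thing when $(a,b)$ is hidden. So distinguishing $C_{\secp,n}$ from $C^0_{\secp,n}$ reduces, via a standard hybrid argument over the (polynomially many) black-box queries the distinguisher makes, to distinguishing a commitment to the true key from a commitment to an unrelated key (or to $0^{2n}$) — which is ruled out by the computational hiding property of the classical commitment scheme, itself implied by post-quantum one-way functions~\cite{Yan22}. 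Efficiency of $C_{\secp,n}$ (circuits of size $\poly(\secp,n)$) is immediate since sampling $(a,b)$, applying Paulis, and running the commitment algorithm are all efficient.

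The main obstacle I anticipate is handling the \emph{black-box / multi-query} nature of property~2 rather than a single-shot indistinguishability statement: the distinguisher may feed adversarially chosen (possibly entangled) inputs across many invocations of the channel, so one must argue that fresh independent one-time pads and fresh independent commitment randomness are used each time, and then peel off queries one at a time in the hybrid, at each step invoking the Pauli twirl to replace the true quantum input by maximally mixed and then invoking hiding to replace the committed key. A secondary subtlety is that statistical binding for \emph{quantum} messages is being claimed only in the weak sense of property~1 (existence of an inefficient approximate inverse), so I would be careful to phrase the binding guarantee of the underlying \emph{classical} commitment as ``there is a unique committed string except with negligible probability,'' from which the inefficient $D$ is constructed by exhaustive search; no quantum binding notion beyond this is needed. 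I would also remark that the construction extends verbatim to the weaker EFI-pair assumption, since all that is used of the classical commitment is non-interactive generation, statistical binding, and computational hiding.
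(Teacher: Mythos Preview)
Your proposal is correct and follows essentially the same construction and argument as the paper: quantum one-time pad the input, commit to the $2n$ pad bits with a post-quantum non-interactive statistically binding classical commitment, recover via brute-force extraction of $(a,b)$ for property~1, and for property~2 use hiding to decouple the commitment from $(a,b)$ and then the Pauli twirl to equate the padded input with the padded $\ket{0^n}$. The only cosmetic difference is that the paper's hybrid replaces the committed bits by fresh uniform bits rather than by $0^{2n}$, and it does not explicitly spell out the multi-query hybrid you mention (though that is indeed the right way to handle the black-box formulation of property~2).
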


\begin{proof}
The proof follows from the existence of quantum non-interactive commitments schemes for classical messages under the assumption that post-quantum one-way functions exist (and in fact even potentially weaker assumptions), see \cite{Yan22,BCQ22} and references therein. In particular, let $\{Q_{b,i}(\lambda)\}_{b\in\{0,1\}, i\in\{1,\ldots,2n\}}$ be a family of $2n$ canonical non-interactive quantum bit commitment schemes, as described in~\cite[Definition 5]{Yan22}, such that the schemes are computationally hiding and statistically binding (for some negligible error $\eps$). 

The channel $C_{\lambda,n}$ is defined as follows. $C_{\lambda,n}$ first applies a quantum one-time pad \cite{AmbainisMTW00} to its $n$-qubit input $\rho$, obtaining $\rho_{a,b}=\sigma_X(a)\sigma_Z(b)\rho(\sigma_X(a)\sigma_Z(b))^\dagger$ with $\sigma_X$, $\sigma_Z$ the Pauli matrices. It then commits to the $2n$ (classical) bits of the pad using a non-interactive statistically binding commitment for classical messages, obtaining quantum states $\sigma_{i}= \tr_R (Q_{c_i,i}(\lambda)\proj{0}Q_{c_i,i}(\lambda))$, where $c_i=a_i$ for $i\in \{1,\ldots,n\}$ and $c_i=b_i$ for $i\in\{n+1,\ldots,2n\}$. Finally, $C_{\lambda,n}$ returns the $n$-qubit padded states $\rho_{a,b}$, together with the $2n$ states $\sigma_1,\ldots,\sigma_{2n}$. 

We argue that the two required properties hold. First, the inversion channel $D$ uses that the commitment scheme is statistically binding to recover the pads $a,b$ from the states $\sigma_i$, with $1-\negl(\lambda)$ success probability. It then undoes the one-time pad to recover $\rho$. Second, we can show that the channels $C^0_{\secp,n}$ and $C_{\lambda,n}$ are computationally indistinguishable through a sequence of hybrids. Using the computational hiding property for the commitment schemes, replacing the states $\sigma_i$ by $\sigma'_{i}= \tr_R (Q_{c'_i,i}(\lambda)\proj{0}Q_{c'_i,i}(\lambda))$, where $c'_i\in \{0,1\}$ is chosen uniformly at random and is independent of $a,b$, leads to a channel $C'_{\lambda,n}$ that is computationally indistinguishable from $C_{\lambda,n}$. Now the only difference between $C'_{\lambda,n}$ and $C^0_{\secp,n}$ is that the former returns its one-time padded input, whereas the latter returns a one-time padded $\ket{0^n}$ state. These are statistically indistinguishable. 

\end{proof}

The next lemma shows that there exists a family of states $\{\rho^\lambda\}$ that can be efficiently prepared, is maximally entangled from an information-theoretic point of view, and yet has no meaningful distillable entanglement in the computational setting.

\begin{lemma}\label{lem:no-comp-dist}
	Assume the existence of a NISBQ scheme. Then there exists a family $\{\rho^\lambda\}$ of pure bipartite states on $2n$-qubits such that:
	\begin{enumerate}
		\item $E_D(\rho^{\secp})=n$. That is, information-theoretic distillation of almost all entropy is possible.
		\item If $\hat{E}^{\varepsilon}_D(\rho^{\secp}) \ge m$ then $m(\secp) \cdot (1-\varepsilon(\secp)-2^{-2m}) = \negl(\secp)$, where $\negl(\lambda)$ denotes any negligible function of $\lambda$. Note that this in particular implies that if $1-\varepsilon$ is inverse-polynomially larger than $2^{-2m}$, then $m \lass 0$.
	\end{enumerate}
\end{lemma}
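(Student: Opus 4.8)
The plan is to define the state $\rho^\lambda$ by purifying the output of the NISBQ channel $C_{\lambda,n}$ applied to one half of $n$ EPR pairs, and then to argue that efficient distillation from $\rho^\lambda$ alone would violate the computational indistinguishability property of the scheme. Concretely, I would let Alice hold $n$ qubits and Bob hold the output of $C_{\lambda,n}$ run on Bob's halves of $n$ EPR pairs shared with Alice; more precisely, take $\ket{\Phi^n}_{A B_0}$, apply (a Stinespring dilation of) $C_{\lambda,n}$ to $B_0$ to produce Bob's register $B$, and let $\rho^\lambda = \proj{\psi^\lambda}_{AB}$ be the resulting global pure state (keeping the dilating ancilla on Bob's side so the state is pure and bipartite across $A:B$). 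Since $C_{\lambda,n}$ has a (possibly inefficient) approximate inverse $D$, Bob can information-theoretically apply $D$ to recover a state negligibly close to $\ket{\Phi^n}_{AB_0}$, so $E_D(\rho^\lambda) = n$ up to negligible corrections; a small argument using continuity of $E_D$ (or just directly exhibiting the LOCC protocol "Bob applies $D$, outputs the $n$ recovered EPR pairs") gives item~1. One should double-check that $E_D$ can be taken exactly $n$ rather than $n - \negl$; if continuity only gives $n-o(1)$ then state item~1 with that negligible slack, which is harmless.

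For item~2, suppose $\{\hat\Gamma^\lambda\}$ is an efficient LOCC family distilling $m(\lambda)$ EPR pairs from $\rho^\lambda$ with error $\varepsilon(\lambda)$, i.e.\ $F(\hat\Gamma^\lambda(\rho^\lambda), \Phi^{\otimes m}) \geq 1 - \varepsilon$. The key point: an efficient LOCC protocol acting on $\rho^\lambda$ only needs black-box access to the channel $C_{\lambda,n}$ (to prepare $\rho^\lambda$, Alice and Bob locally generate $n$ EPR pairs, Bob runs $C_{\lambda,n}$ on his halves). Now replace $C_{\lambda,n}$ by $C^0_{\lambda,n}$, which traces out its input and runs $C_{\lambda,n}$ on $\ket{0^n}$: the resulting state $\rho_0^\lambda$ has the $A$-register in the maximally mixed state, completely uncorrelated with $B$ (since $B$ now depends only on $\ket{0^n}$), hence $\rho_0^\lambda$ is a product state across $A:B$ and its distillable entanglement — even information-theoretically — is $0$. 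Therefore $F(\hat\Gamma^\lambda(\rho_0^\lambda), \Phi^{\otimes m}) \leq 2^{-2m}$, because the maximally entangled state $\Phi^{\otimes m}$ has maximal overlap $2^{-2m}$ with any separable state (the reduced state of $\hat\Gamma^\lambda(\rho_0^\lambda)$ is separable since LOCC preserves separability). Now consider the distinguisher that, given one copy of the state, runs $\hat\Gamma^\lambda$ and then measures whether the output is $\Phi^{\otimes m}$ (using the projector $\proj{\Phi}^{\otimes m}$); this is an efficient procedure, and its acceptance probability differs by at least $(1-\varepsilon) - 2^{-2m}$ between the $C$-world and the $C^0$-world. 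By the indistinguishability property of NISBQ this advantage must be negligible — but one must be a bit careful: the NISBQ definition gives indistinguishability of the channels $C$ vs.\ $C^0$ under black-box access, so I would phrase the distinguisher as an algorithm with oracle access to one of the two channels, which prepares $\rho$ (or $\rho_0$), runs $\hat\Gamma^\lambda$, and measures. This yields $(1-\varepsilon(\lambda)) - 2^{-2m(\lambda)} = \negl(\lambda)$, hence $m(\lambda)\cdot\big(1-\varepsilon(\lambda)-2^{-2m(\lambda)}\big) = \negl(\lambda)$ as claimed (the extra factor $m(\lambda)$ in the statement only weakens the conclusion, so it is certainly implied; presumably the authors want it for the clean corollary that $m \lass 0$ when $1-\varepsilon$ is inverse-polynomially above $2^{-2m}$).

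\textbf{Main obstacle.} The conceptual crux — and the step I would be most careful about — is the claim that an efficient LOCC distillation protocol for $\rho^\lambda$ can be turned into an efficient \emph{black-box distinguisher} for the channels $C$ vs.\ $C^0$. One has to make sure that (i) preparing $\rho^\lambda$ from the channel genuinely only requires one black-box call to $C_{\lambda,n}$ by Bob (straightforward: local EPR generation plus one channel call), (ii) running the LOCC protocol $\hat\Gamma^\lambda$ and the final fidelity measurement is efficient and \emph{local to the distinguisher}, which is fine since the distinguisher may internally simulate both parties of the LOCC protocol (the LOCC constraint was only needed to control what happens on $\rho_0^\lambda$, not for the reduction), and (iii) the fidelity / overlap bound $2^{-2m}$ for separable states versus $\Phi^{\otimes m}$, together with the fact that LOCC (indeed any separable channel, or even any channel, since $\rho_0$ is already product and Alice's marginal is untouched-random) outputs a state whose $A:B$ marginal is separable — this is the standard fact that $\max_{\sigma \text{ sep}} \bra{\Phi^{\otimes m}}\sigma\ket{\Phi^{\otimes m}} = 2^{-2m}$. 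A secondary subtlety is making item~1 fully rigorous: the inverse channel $D$ is only an \emph{approximate} inverse in diamond norm, so Bob's recovered state is $\negl$-close to $\ket{\Phi^n}$, and I would invoke a continuity bound (e.g.\ Fannes-type or the Alicki–Fannes–Winter bound on $E_D$, or simply note that $E_D^\varepsilon$ is monotone-friendly under small perturbations) to conclude $E_D(\rho^\lambda) \geq n - \negl$, upgrading to $=n$ if one is willing to absorb the negligible loss into the statement. None of these steps is deep, but the reduction in (i)–(ii) is where the argument really lives and deserves a clean writeup.
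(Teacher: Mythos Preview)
Your plan matches the paper's proof almost exactly: apply $C_{\lambda,n}$ to one half of $n$ EPR pairs, compare against the analogous state built from $C^0_{\lambda,n}$ (which is product across $A:B$), and turn any hypothetical efficient distiller into a channel distinguisher via the EPR-fidelity test. The comparison state, the separability bound, and the structure of the reduction are all as in the paper.

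There is, however, one genuine inconsistency in your writeup. You define $\rho^\lambda$ as the \emph{pure} state obtained via a Stinespring dilation $V$ of $C_{\lambda,n}$, with the environment register kept on Bob's side, so that Bob's register $B$ includes this environment and $\hat\Gamma^\lambda$ acts on $A:(B'B_1)$. But then in step~(i) of your reduction you assert that ``preparing $\rho^\lambda$ from the channel genuinely only requires one black-box call to $C_{\lambda,n}$.'' That is only true for the \emph{mixed} state $(I_A\otimes C)(\Phi^n)$: black-box channel access returns the channel output $B'$ alone, not the Stinespring environment $B_1$, so the distinguisher cannot reconstruct your pure $\rho^\lambda$ and hence cannot feed it to $\hat\Gamma^\lambda$. (Having the circuit for $V$ as non-uniform advice does not help: if the distinguisher applies $V$ itself instead of the black box, it prepares the same state regardless of whether the oracle is $C$ or $C^0$ and distinguishes nothing.) A related problem arises on the $C^0$ side: the natural Stinespring dilation of $C^0$ stores the traced-out input in its environment, so if Bob holds that environment the resulting pure state still carries the original $n$ EPR pairs and is \emph{not} separable across $A:B$---so your separability argument for $\rho_0^\lambda$ is implicitly already about the mixed state.

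The paper sidesteps all of this by defining $\rho^\lambda$ directly as the mixed state $(I_A\otimes C_{\lambda,n})(\Phi^n)$ and the comparison state $\sigma^\lambda$ as $(I_A\otimes C^0_{\lambda,n})(\Phi^n)$; the word ``pure'' in the lemma statement is best read as a minor imprecision in the paper. If you drop the Stinespring step and work with these mixed states throughout, your argument is complete and identical to the paper's.
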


\begin{proof} Let $C_{\secp,n}$ be a NISBQ scheme. 
	We define $\rho^{\secp}$ as follows.
	\begin{enumerate}
		\item Start with a bipartite EPR state: $\frac{1}{2^{n/2}} \sum_{x \in \binset^n} \ket{x}_A \ket{x}_B$.
		\item Apply the channel $C_{\secp, n}$ locally to the $B$ register.
	\end{enumerate}
	By construction, this state is efficiently generated.
	
	The existence of the inverse channel $D$ implies, essentially by definition, distillation of $n$ EPR pairs to within a negligible infidelity, showing item 1 in the lemma. 
	
	To show item 2, consider the state $\sigma^\secp$ defined using the following generation procedure.
	\begin{enumerate}
	\item Start with a bipartite separable state $\frac{1}{2^{n}} \Id_{A} \otimes \proj{0^n}_B$, where $\Id_A$ denotes the identity over an $n$-qubit space. 
	\item Apply the commitment channel $C_{\secp, n}$ locally to the $B$ register.
\end{enumerate}
Note that $\rho^\lambda$ and $\sigma^\lambda$ can be seen as being prepared using the same procedure, except that the former uses $C_{\lambda,n}$ while the latter uses $C^0_{\lambda,n}$. It follows from the second property of the NISBQ scheme that $\rho^\secp$ and $\sigma^\secp$ are computationally indistinguishable.

Now assume towards a contradiction that $m, \varepsilon$ and an efficient distillation procedure exist, and consider the following candidate distinguisher between $\rho^\secp$ and $\sigma^\secp$. Given an input, the distinguisher applies the efficient distillation procedure on its input state, then projects the outcome on the state containing $m$ EPR pairs and accepts if the projection succeeds.

On one hand, from our assumption on $m, \varepsilon$ and the distillation procedure, there are infinitely many values of $\secp$ such that $m(\secp) > 0$ and $1-\varepsilon- 2^{-2m} \ge 1/p(\secp)$ for some polynomial $p$. For these values of $\secp$, the above process accepts $\rho^\secp$ with probability at least $1-\varepsilon$. On the other hand, since $\sigma^\secp$ is separable, the success probability on it cannot be larger than $2^{-2m}$. It follows that our distinguisher has advantage $1/p(\secp)$ for some polynomial $p$, in contradiction to the states being computationally indistinguishable.
\end{proof}

        An interesting question is whether there are other natural examples when we see a gap between the information-theoretic distillable entanglement and the computational one.
        Our arguments are based either on the hardness of applying some unitaries to ``start'' with inefficient states in a very explicit way, or build on post-quantum cryptographic assumptions. 
      
				\begin{question}
            Are there efficient families $\{\rho^\lambda\}$ such that $E_D^{\varepsilon}(\rho^\lambda) >  \hat{E}_D^{\varepsilon}(\rho^\lambda)$ without invoking post-quantum cryptographic assumptions?
        \end{question}

\subsection{A separations for entanglement cost in the random oracle model (explicit)}
\label{sec:sep-ec}

Showing a separation between computational and information-theoretic entanglement cost, for a family of efficient states, is more difficult. The reason is that one has to argue that even though the states in question can be prepared by efficient circuits, they cannot be prepared by efficient LOCC channels. Proving lower bounds on the existence of LOCC protocols is a notoriously difficult task. 

We conjecture that the family of states from~\cite{aaronson2022quantum} exhibits such a separation (once one fixes an arbitrary cut, and in the appropriate range of parameters). Short of proving this, we give a candidate separation by studying a similar construction, in the random oracle model~\cite{boneh2011random}. We show that even though the states have low information-theoretic entanglement cost (even for \emph{local} protocols, with no communication at all), they have high computational entanglement cost for \emph{one-way} LOCC protocols. Going beyond a lower bound for one-way protocols seems challenging. 

Because the random oracle model is not the focus of the paper, we do not give complete definitions for the model, and only sketch the proof of the lemma below. The lemma and its proof sketch should be understandable to readers familiar with the model.

\begin{lemma}\label{lem:ec_gap} 
Let $\ell:\N_+\to \N$ be a function and $H:\{0,1\}^* \to \{0,1\}^*$ a random oracle mapping strings of length $\lambda$ to strings of length $2\ell(\lambda)$. Let $H^{-1}$ denote the inversion oracle, i.e.\ such that for every $x$, $H^{-1}(H(x))=x$ (if some $y$ has more than one possible preimage, $H^{-1}(y)$ may return any valid preimage). For any integer $\lambda\geq 1$, let 
\[ \ket{\psi^{H,\lambda}} \,=\, \frac{1}{\sqrt{2^\lambda}} \sum_{x\in\{0,1\}^\lambda} \ket{H(x)_L}_A \ket{H(x)_R}_B\;,\]
where we write $H(x)_L$ for the first $\ell(\lambda)$ bits of $H(x)$ (the ``Leftmost'' bits) and $H(x)_R$ for the last $\ell(\lambda)$ bits of $H(x)$. 

Then $E_C^{0}(\psi^{H,\lambda})=\lambda$. Moreover, the state $\psi^{H,\lambda}$ can be prepared efficiently locally using oracle access to both $H$ and $H^{-1}$, and in particular the function $n(\lambda)=\ell(\lambda)$ is a valid upper bound on $\hat{E}_C^0(\psi^{H,\lambda})$. However, for \emph{any} valid upper bound $n'(\lambda)$ on $\hat{E}_C^\eps(\psi^{H,\lambda})$, for small enough $\eps$ and \emph{when restricted to LOCC protocols with one-way communication from $A$ to $B$}, it holds that $n'(\lambda) \gass \ell(\lambda) - \negl(\lambda)$.
\end{lemma}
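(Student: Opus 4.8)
The plan is to establish the three claims in turn: the information-theoretic entanglement cost, the upper bound via local oracle-aided preparation, and the one-way LOCC lower bound, which is the heart of the lemma. For the first claim, observe that $\ket{\psi^{H,\lambda}}$ is a pure state whose Schmidt decomposition across $A:B$ is governed by the collision structure of $x\mapsto (H(x)_L, H(x)_R)$; since $H$ is a random oracle mapping $\lambda$ bits to $2\ell(\lambda)\geq 2\lambda$ bits, with overwhelming probability $H$ is injective on $\{0,1\}^\lambda$ and moreover $H(x)_L$ takes $2^\lambda$ distinct values (no $L$-collisions) except with negligible probability when $\ell(\lambda)\geq\lambda$. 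Hence the reduced state $\psi_A^{H,\lambda}$ is (up to negligible error) maximally mixed on a $2^\lambda$-dimensional subspace, so $E(\psi^{H,\lambda})=\lambda$, and by Lemma~\ref{lem:etremal_em} (the collapse of entanglement measures for pure states) $E_C^0(\psi^{H,\lambda})=\lambda$. (If one wants $E_C^0$ exactly $\lambda$ rather than $\lambda-\negl$, condition on the measure-one event that $H$ is ``good'' in the above sense, which is the standard convention in the ROM.)

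\textbf{Upper bound via local preparation.} For the second claim, each party, given oracle access to $H$ and $H^{-1}$, can prepare $\ket{\psi^{H,\lambda}}$ starting from $\ell(\lambda)$ shared EPR pairs on a register holding $H(x)_L$-type strings. Concretely: $A$ holds $\ell(\lambda)$ qubits of EPR pairs with $B$; the two parties want the shared state $\frac{1}{\sqrt{2^\lambda}}\sum_x \ket{H(x)_L}\ket{H(x)_R}$. One clean way: starting from $\frac{1}{\sqrt{2^\lambda}}\sum_x\ket{x}_A\ket{x}_B$ (which costs only $\lambda$ EPR pairs, even less than $\ell$), each party locally computes $H$ in superposition, writing $H(x)_L$ on $A$ and $H(x)_R$ on $B$ into fresh registers, then uncomputes $x$ using $H^{-1}$ applied to the concatenation $H(x)_L\|H(x)_R$ — but $A$ only has $H(x)_L$, so this uncomputation is the subtle point. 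The actual argument the authors likely intend is simpler: the state is on $2\ell(\lambda)$ qubits total, it is efficiently preparable \emph{as a global circuit} using $H, H^{-1}$, hence by teleportation it has a one-way LOCC preparation using $\ell(\lambda)$ EPR pairs (teleport $B$'s half of the locally prepared state). So $n(\lambda)=\ell(\lambda)$ is a valid upper bound on $\hat{E}_C^0$.

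\textbf{Lower bound for one-way LOCC (main obstacle).} This is the crux. Suppose toward contradiction that there is an efficient one-way (from $A$ to $B$) LOCC family using $n'(\lambda)$ EPR pairs with $n'(\lambda) \leq \ell(\lambda) - 1/p(\lambda)$ infinitely often, producing $\psi^{H,\lambda}$ to within error $\eps$. In a one-way protocol, $A$ applies a local (oracle-aided, polynomial-query) channel to its half of the $n'(\lambda)$ EPR pairs and sends a classical message $\mu$ to $B$; then $B$ applies a local channel depending on $\mu$. The key information-theoretic observation: after $A$'s operation and sending $\mu$, the state shared between $A$'s output register and $B$ is, conditioned on $\mu$, obtained from a state whose $A$-side has at most $n'(\lambda)$ qubits of ``quantum'' content correlated with $B$ — more precisely, $B$'s reduced state $\psi_B^{H,\lambda}$ (maximally mixed on a $2^\lambda$-dim space, since $H(x)_R$ also has no collisions w.h.p. when $\ell\geq\lambda$... actually one needs $\ell(\lambda)\geq\lambda$; the lemma hypothesis is $\ell(\lambda)\geq\lambda$ implicitly via $\ell:\N_+\to\N$ with the teleportation remark, and $\psi_B$ has rank $\le 2^{\ell(\lambda)}$ always, rank $=2^\lambda$ generically — I'll use rank $\le 2^{n'(\lambda)+(\text{message entropy? no})}$). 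The clean statement is: in a one-way LOCC protocol where $A\to B$ communication is classical, the Schmidt rank of the final state across $A:B$ is at most the Schmidt rank of the initial resource state, i.e.\ $2^{n'(\lambda)}$, because $B$'s local operations and classical conditioning cannot increase Schmidt rank, and $A$'s local operations cannot either. But $\psi^{H,\lambda}$ has Schmidt rank $2^\lambda$ (w.h.p.), and $2^\lambda \leq 2^{\ell(\lambda)}$ always holds, so \emph{rank alone does not give the bound} — this is why the ROM and the inversion oracle matter and why it's ``one-way only.''

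The real argument must be: with $n'(\lambda)$ EPR pairs and a \emph{classical} message from $A$, $B$ must essentially reconstruct its half $H(x)_R$-registers; but $B$, making only polynomially many queries to $H, H^{-1}$, together with $A$'s polynomially-many-query preprocessing and a polynomial-length classical message, cannot "know" enough about the $2^{2\ell(\lambda)}$-sized random string table $H$ to create the correct $\ell(\lambda)$-qubit superposition on its side correlated with $A$'s side, unless the shared EPR resource already carried $\approx \ell(\lambda)$ qubits. Formally, I would (i) fix the transcript/message $\mu$ and A's measurement outcomes, reducing to: a bipartite state produced by $B$ applying a channel to ($n'(\lambda)$ qubits from $A$) $\otimes$ ($B$'s EPR halves) with oracle queries, must be $\eps$-close to $\psi^{H,\lambda}$; (ii) argue via a compression/information-theoretic argument (à la the adversary method or a direct counting over oracles) that fooling this for a random oracle requires the pre-message $A:B$ state to have Schmidt rank $\geq 2^{\ell(\lambda)-\negl}$ across the cut, since $B$'s side of $\psi^{H,\lambda}$ is (w.h.p.) maximally mixed of rank $2^{\min(\lambda,\ell)}$ and, crucially, the \emph{unitary} relating $B$'s physical EPR register to the target $H(x)_R$-register is oracle-dependent in a way $B$ cannot implement with few queries unless it receives the information from $A$ — and a classical message of polynomial length cannot convey the requisite $\approx \ell(\lambda)$-qubit worth of oracle-dependent quantum correlation. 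I expect the main obstacle to be making step (ii) rigorous: lower-bounding the required Schmidt rank (equivalently, the required number of input EPR pairs) of the $A:B$ state \emph{before} $B$'s operation, using that $B$ has only oracle access and a short classical message. I would handle it by the standard ROM technique of replacing $H$ on the relevant domain by a freshly random value and bounding the query magnitude (a hybrid/BBBV-style argument), concluding that $B$'s output is within $O(q \cdot 2^{-\ell/2})$ of being independent of $H$'s unqueried values, forcing $n'(\lambda) \geq \ell(\lambda) - \negl(\lambda)$.
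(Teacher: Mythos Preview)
Your treatment of the first two claims is essentially in line with the paper: the Schmidt-rank/collision-freeness argument gives $E_C^0=\lambda$, and the paper also prepares $\ket{\psi^{H,\lambda}}$ by a single global oracle-aided circuit (query $H$ in superposition, then uncompute $x$ via $H^{-1}$) and then invokes teleportation for the $\ell(\lambda)$ upper bound. Your observation that the ``$\lambda$-EPR'' preparation fails because neither party can uncompute $x$ from only half of $H(x)$ is exactly the intuition the paper records for the \emph{lower} bound.

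The gap is in the lower-bound argument. You correctly rule out the Schmidt-rank route (rank is only $2^\lambda$), but the BBBV/reprogramming plan you sketch bounds \emph{query dependence}, not the number of EPR pairs $n'$: showing that $B$'s output is $O(q\cdot 2^{-\ell/2})$-close to independent of unqueried oracle values does not by itself force $n'\ge \ell-\negl$, because $A$ also queries the oracle and can legitimately fix the relevant oracle values; what you need is a mechanism that converts the parameter $n'$ into a \emph{communication/compression} quantity. The paper's key structural step, which your plan is missing, is this: conditioning on $A$'s classical message $c$, the joint state (before $B$ acts) has the form $2^{-\lambda/2}\sum_x \ket{H(x)_L}_A\ket{\psi_{x,c}}_B$ with $\ket{\psi_{x,c}}$ supported on exactly $n'$ qubits, and $U_B^c$ maps $\ket{\psi_{x,c}}\mapsto \ket{H(x)_R}\ket{u_c}$ for an $x$-independent advice state $\ket{u_c}$. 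Hence $(U_B^c)^{-1}$ is an efficient, oracle-aided procedure that \emph{compresses} the $\ell$-bit string $z=H(x)_R$ (plus fixed advice) into an $n'$-qubit state from which $z$ can be recovered. The paper then runs a hybrid using the one-way-to-hiding lemma to decouple the target $z$ from the rest of $H$, switches $H$ to a $2Q$-wise independent family (Zhandry), and finally invokes Nayak's quantum random-access-code/compression bound to get $n'\ge \ell - \log(1/p)$. Without this compression reduction, there is no link between $n'$ and the random-oracle hardness, and your step~(ii) as stated would not close.
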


\begin{proof}
By definition, the von Neumann entropy of $\ket{\psi^{H,n}}$ across the $A:B$ cut is always at most $n$. Moreover, as long as $\ell(n+1)>2n$ then with overwhelming probability over the choice of $H$, the strings $H(x)_L$ are pairwise distinct, for $x\in\{0,1\}^n$, as well as the strings $H(x)_R$. Thus the von Neumann entropy, and hence the entanglement cost, of $\ket{\psi^{H,n}}$ is exactly $n$, with high probability over the choice of $H$. The fact that $\ket{\psi^{H,n}}$ can be prepared efficiently is straightforward, as shown by the following protocol: 
\begin{align*}
\ket{0^{n+\ell(n)}} & \mapsto \frac{1}{\sqrt{2^n}}\sum_{x\in\{0,1\}^n} \ket{x} \ket{0^{\ell(n)}}\\
&\mapsto \frac{1}{\sqrt{2^n}}\sum_{x\in\{0,1\}^n} \ket{x} \ket{H(x)}\\
&\mapsto \frac{1}{\sqrt{2^n}}\sum_{x\in\{0,1\}^n} \ket{0^n} \ket{H(x)}\,=\, \ket{0^n}\otimes \ket{\psi^{H,n}}\;.
\end{align*}
Here, the second line is obtained by performing a query to $H$, and the third line by performing a query to $H^{-1}$. The third line is only valid if $H$ is injective on length-$n$ inputs. As discussed earlier, for $\ell(n)$ large enough this is the case with overwhelming probability. 

We first give the intuition for the lower bound on the computational entanglement cost of the family $\{\psi^{H,n}\}$. A natural protocol for preparing $\ket{\psi^{H,n}}$ starts with $n$ EPR pairs and then evaluates $H$ locally to generate 
\[ \ket{\tilde{\psi}^{H,n}} \,=\, \frac{1}{\sqrt{2^n}} \sum_{x\in\{0,1\}^n} \ket{x}_{A'} \ket{H(x)_L}_A \otimes \ket{H(x)_R}_B \ket{x}_{B'}\;.\]
However, ``erasing'' the string $x$ locally seems to require inverting $H$ given only half of its output, which is easily shown to require an exponential number of queries. (Indeed, this is equivalent to having two independent oracles $H_L(x)=H(x)_L$ and $H_R(x)=H(x)_R$. Access to the inversion oracle for $H$ does not help in inverting either $H_L$ or $H_R$.)

\paragraph{LO protocols.}
We now elaborate on this intuition for exact local protocols (LO), i.e.\ protocols with no communication at all. Suppose for contradiction that there exists a local protocol creating $\ket{\psi^{H,n}}$ from some number $t$ of EPR pairs. A local protocol is specified by two efficient unitaries, $U_A$ and $U_B$, each acting on $t+m$ qubits, and each allowed to make oracle calls to both $H$ and $H^{-1}$ at unit cost.  These unitaries must be such that 
\[ \big(U_A \otimes U_B \big)\Big( \frac{1}{\sqrt{2^t}} \sum_{z\in \{0,1\}^t} \ket{z,0^m}_A \otimes \ket{z,0^m}_B\Big)
\,=\, \frac{1}{\sqrt{2^n}}\sum_{x\in\{0,1\}^n} \ket{H(x)_L}_A \ket{H(x)_R}_B \ket{\Phi}_{A'B'}\;,\]
where $\ket{\Phi}_{A'B'}$ is an arbitrary state of $2(m+t-\ell)$ qubits. Because local unitaries do not change the Schmidt coefficients, $\ket{\Phi}_{A'B'} \propto \sum_{i} \ket{u_i}_{A'}\ket{v_i}_{B'}$ must have uniform Schmidt coefficients. In particular, for every $i$ and $x$ it must be that $U_A^{-1}\ket{H(x)_L}\ket{u_i} = \ket{\varphi_{x,i}}\otimes \ket{0^m}$, for some $t$-qubit state $\ket{\varphi_{x,i}}$. Similarly, it must be that $U_B^{-1}\ket{H(x)_R}\ket{v_i} = \ket{\varphi_{x,i}}\otimes \ket{0^m}$. Crucially, the same state $\ket{\varphi_{x,i}}$ appears on both sides; this is by rotation invariance of the EPR pair. Hence
\[ U_B U_A^{-1}\ket{H(x)_L}\ket{u_i}\,=\, \ket{H(x)_R}\ket{v_i}\;.\]
This gives an efficient procedure, the composition $U_BU_A^{-1}$, that given the state $\ket{u_i}$ (for any $i$), transforms $\ket{H(x)_L}$ into $\ket{H(x)_R}$, for any $x$. The same procedure can, given $H(x)_L$ as a classical input and $\ket{u_i}$ as a quantum ``advice'' state, save a copy of $H(x)_L$, perform the conversion to obtain $H(x)_R$, and recover $x$ by evaluating $H^{-1}$ on $H(x)=H(x)_L\|H(x)_R$. Thus, there is an efficient algorithm that given at most $m+t-\ell$ qubits of advice (which may depend on $H$, but not on $x$), is able to recover $x$ from $H(x)_L$ only.  Such an algorithm is directly ruled out by existing lower bounds for function inversion with advice, e.g.~\cite{chung2020tight}. 

\paragraph{1-way LOCC protocols.}
Finally we give the proof for the case of an exact one-way protocol, where the communication is from $A$ to $B$. A one-way protocol can be decomposed as a sequence of three steps. Firstly, a unitary $U_A$ on $t+m+r$ qubits is applied to $t$ half-EPR pairs, concatenated with $m+r$ ancilla qubits each initialized in the $\ket{0}$ state. Then, the last $r$ qubits are measured in the computational basis to obtain a string $c\in\{0,1\}^r$ that is sent to $B$. Finally, $B$ applies a unitary $U_B^c$ on $t+m$ qubits, containing the second $t$ half-EPR pairs as well as $m$ ancilla qubits each initialized in the $\ket{0}$ state. As in the previous part, for any outcome $c$ obtained in the first step, the result of the transformation should eventually be a state of the form $\ket{\psi^{H,n}} \otimes \ket{\Phi_c}$, for some bipartite state $\ket{\Phi_c}$. However, it is clear that measuring the $A$ part of $\ket{\Phi_c}$ in the Schmidt basis  and sending the outcome to $B$ enables a new protocol that produces $\ket{\psi^{H,n}}\ket{u^{H,n}_c}$, where $\ket{u^{H,n}_c}$ is a pure state on the $B$ side. (This reduction may increase the complexity of the unitary $U_A$, as well as the classical communication, but we will not need this to be bounded.) 

The state after measurement of $c$, conditioned on a specific $c$, can be written as 
\[ \frac{1}{\sqrt{2^n}}\sum_{x\in\{0,1\}^n} \ket{H(x)_L} \ket{\psi_{x,c}} \ket{0^m}\;,\]
for some orthonormal family of states $\{\ket{\psi_{x,c}}\}_x$ on $t$ qubits. At the last step, the unitary $U_B^c$ rotates $\ket{\psi_{x,c}}$ to $\ket{H(x)_R}\ket{u^{H,n}_c}$. Because access to both $H$ and $H^{-1}$ is given, the unitary $(U_B^c)^{-1}$ is also efficient, and maps $\ket{H(x)_R}\ket{u^{H,n}_c}$ to $\ket{\psi_{x,c}}$, for any $x\in \{0,1\}^n$. Importantly, $\ket{u^{H,n}_c}$ is \emph{independent} of $x$.

We now argue that this form of ``compression'', from $\ell(n)$ bits to $t$ bits, is impossible if $t< \ell(n)$ (unless exponentially many queries are made). To show this, we show how to construct an algorithm that compresses \emph{any} $\ell(n)$ classical bits into a $t$-qubit state, without loss of information. By Holevo's theorem, this is impossible if $t<\ell(n)$. The algorithm is defined through the following sequence of hybrids.

\medskip

\noindent {\bf Hybrid $H_0$:} The function $H$ is sampled uniformly at random, and a classical string $c$ and quantum state $\ket{u^{H,n}_c}$, which may depend on $H$, are sampled as described above. $x\in\{0,1\}^n$ is sampled uniformly at random, and we let $z=H(x)_R$.  On input $z$, $c$ and $\ket{u^{H,n}_c}$ algorithm $\mA_1$ executes $(U_B^c)^{-1}$ to prepare $\ket{\psi_{x,c}}$ on $t$ qubits. On input $c$, $\ket{u^{H,n}_c}$ and a $t$-qubit state $\ket{\psi_x}$, algorithm $\mA_2$ reverses algorithm $\mA_1$ to recover $z'$. We define the success probability $p_0$ of the pair $(\mA_1,\mA_2)$ as the probability that $z'=z$; clearly $p_0=1$.

\medskip
\noindent {\bf Hybrid $H_1$:} We first sample $z\in\{0,1\}^{\ell(n)}$ uniformly at random. Next, we sample $x\in\{0,1\}^n$ and $H$ at random conditioned on $H(x)_R=x$. The rest of the experiment proceeds as in $H_1$. Clearly, $H_1$ and $H_2$ are statistically indistinguishable and $p_1=p_0$. 

\medskip
\noindent {\bf Hybrid $H_2$:} We sample $z$, $x$ and $H$ as in the previous experiment. Then, we sample $z^*\in\{0,1\}^{\ell(n)}$ uniformly at random and define a new function $H^*$ by $H^*(x)_R=z^*$, and $H^*(x)=H(x)$ at all other points (also, $H^*(x)_L=H(x)_L$). The rest of the experiment proceeds as before. 

Let $Q$ be an upper bound on the number of queries to $H$ and $H^{-1}$ made by $\mA_1$ and $\mA_2$. By applying the one-way-to-hiding lemma (specifically, Theorem 3 in~\cite{ambainis2019quantum}, which allows us to consider joint distributions over $H,c,\ket{u^{H,n}_c},x$), we obtain an algorithm $\mathcal{B}$ that on input $c,\ket{u^{H,n}_c},z$, with query access to $H$ and $H^{-1}$ is able to recover $x$ with probability at least $\frac{1}{4Q^2}|p_2-p_1|^2$. Thus $\mathcal{B}$ solves an instance of the function inversion with advice problem (where the function is $H(\cdot)_R$), which by~\eqref{chung2020tight} requires an exponential number of queries). Assuming the number of queries made by $\mA_1$ and $\mA_2$ is subexponential, we deduce that $|p_1-p_2|$ is subexponential (in $n$). 

\medskip
\noindent {\bf Hybrid $H_3$:} We sample $z\in\{0,1\}^{\ell(n)}$ and $H$ uniformly at random. We do not sample any $x$. The rest of the experiment proceeds as previously. Clearly, $H_2$ and $H_3$ are statistically indistinguishable and $p_3=p_2$. 

\medskip
\noindent {\bf Hybrid $H_4$:} We proceed as in the previous experiment, except that $H$ is sampled from a family of $2Q$-wise independent hash functions.  Using~\cite[Theorem 3.1]{zhandry2015secure}, $p_4=p_3$.  

Now we observe that Hybrid $H_4$ can be reformulated as follows. First we sample $H$ as in $H_4$, and then $c$ and $\ket{u^{H,n}_c}$. This classical and quantum information is given to two parties Alice and Bob, and considered to be shared randomness (this can be considered shared randomness because we may as well give a complete classical description of $\ket{u^{H,n}_c}$ to both parties; since computational complexity is not being accounted for). Next, on on input a uniformly random $z\in\{0,1\}^{\ell(n)}$, Alice generates the $t$-qubit state $\ket{\psi_{z,c}}$ by executing $\mA_1$ (where we wrote $z$, instead of $x$, since the state now only depends on $z$). Alice sends the state to Bob, who executes $\mA_2$ to recover $z'$. Here, Alice and Bob solve exactly the communication problem considered in~\cite[Theorem 1.1]{nayak2006limits}, from which it follows that $t\geq \ell(n)-\log \frac{1}{p_4}$. Altogether, this proves the theorem. 
\end{proof}

    \section{Pseudo-entanglement and cryptographic constructions}\label{sec:constructions}

In this section we (re-)define pseudo-entanglement, using our new computational, operational measures of entanglement. We then give constructions of pseudo-entangled states that satisfy our new definition, assuming basic cryptographic primitives in post-quantum cryptography (that are implied by the existence of post-quantum one-way functions, which is the only assumption we need). 

\subsection{Pseudo-entanglement}\label{sec:comp_max_ps_def}

We define pseudo-entanglement in the computational setting. Intuitively, a family of states is said to be pseudo-entangled if the states in the family \emph{look} like maximally-entangled states to a computationally bounded quantum algorithm, while at the same time they require a small amount of entanglement \emph{to efficiently prepare}.
This is analogous to pseudo-randomness --- a resource that requires less randomness to prepare than the uniform distribution but still cannot be distinguished from uniform by any computationally bounded distinguisher.  

We give two definitions. The first one is more general, and allows us to refer to a family of states as ``having pseudo-entanglement $(c(\lambda),d(\lambda))$,'' for any pair of functions $c,d:\mathbb{N}_+\to\mathbb{N}$, as long as (a) $c(\lambda)$ is an upper bound on its computational entanglement cost, and (b) it is computationally indistinguishable from a family such that $d(\lambda)$ is a lower bound on its computational distillable entanglement. (Of course, the notion is only interesting for functions $(c,d)$ such that $c<d$, but we state it in general.) In the second definition, we restrict to the case where $d(\lambda)=n(\lambda)$ where the states are on $2n(\lambda)$-qubits, thus considering a family of states to be pseudo-entangled if it is computationally indistinguishable from a computational maximally-entangled family (but itself has lower computational entanglement cost). 

\begin{definition}[Pseudo-entanglement, general definition]\label{def:ps-general}
Let $n,\kappa:\N_+\to \N$ be polynomially bounded functions and $\eps:\N_+\to [0,1]$ and $c,d:\N_+\to \N$ be arbitrary. 
	A family of (potentially mixed) $2n(\lambda)$-qubit bipartite quantum states $\{\psi^k_{AB}\}_k$ is said to  \emph{have pseudo-entanglement $(\eps,c,d)$} if there is a family of $2n(\lambda)$-qubit bipartite quantum states $\{\phi^k_{AB}\}_k$ such that the following conditions hold. 
	\begin{enumerate}
		\item The function  $c$ is a valid upper bound on $\hat{E}_C^{\varepsilon}(\{k,\psi^k_{AB}\}_k)$. 
		\item The function $d$ is a valid lower bound on $\hat{E}_D^\eps(\{k,\phi^k_{AB}\}_k)$
		\item For any polynomial $p(\lambda)$, no poly-time quantum algorithm can distinguish between the ensembles $\rho_{\mathbf{AB}}=\mathbb{E}_k\left[{\psi^k_{AB}}^{\otimes p(\lambda)}\right]$ and $\varphi_{\mathbf{AB}}=\mathbb{E}_k\left[{\phi^k_{AB}}^{\otimes p(\lambda)}\right]$ with more than negligible probability.
		That is, for any poly-time quantum algorithm $\mathcal{A}$, 
		\begin{equation}
			\big| \mathcal{A}(\rho_{\mathbf{AB}}) - \mathcal{A}(\varphi_{\mathbf{AB}}) \big| \leq \negl(\lambda) \;.
		\end{equation}
	\end{enumerate}
\end{definition}

This definition can be compared to the definition of pseudo-entanglement in~\cite{aaronson2022quantum}. The main differences are as follows. Our first condition is a stricter requirement than the condition in~\cite{aaronson2022quantum} that $\{k,\psi^k_{AB}\}_k$ is efficiently preparable and has entanglement entropy at most $c$. This is because having low computational entanglement cost implies both. However, it seems more relevant to us because for the definition from~\cite{aaronson2022quantum}, even though the state has low entanglement it may not be possible to prepare it in a local manner using few EPR pairs. Our second condition is incomparable to the condition in~\cite{aaronson2022quantum} that $\{k,\phi^k_{AB}\}_k$ is efficiently preparable and has entanglement entropy at least $d$. On the one hand, we do not require that $\{k,\phi^k_{AB}\}_k$ is efficiently preparable, because this seems irrelevant for applications (though, for our examples below, this condition is met). However, we do require that the high entanglement in $\{k,\phi^k_{AB}\}_k$ is efficiently accessible, i.e.\ we impose the stronger condition that the family has high computational distillable entanglement. 

Finally, the last condition on indistinguishability is identical to the analogous condition in~\cite{aaronson2022quantum}. In particular, in both cases we consider polynomial-time distinguishers that are given a polynomial number of copies of the state. The restriction to polynomial-time distinguishers is standard, and allows one to use cryptographic assumptions to construct the states. (One could restrict to LOCC distinguishers, and obtain a more permissive notion; we decide to keep the stronger one.) Similarly, one could give a single copy of the state to the distinguisher. However, this would lead to a very weak requirement; as indeed it would suffice for two families to lead to the save density, such as a totally mixed state, when averaged over the key, to be considered ``pseudo-'' of each other. However, a moment of thought reveals that this is very weak-- consider e.g.\ a family formed by all four $2$-qubit Bell states, compared to all $2$-qubit computational basis states; these lead to the same average density, but one would not consider them ``indistinguishable'' by any reasonable measure. The use of multiple copies is furthermore explicitly required for some applications, such as for example to property testing or potentially to IID (or at least ``many copies'') distillation. Since, finally, this only makes the definition stronger, we include the general condition. 

To conclude this comparison we note that in a concurrent work~\cite{bouland2023public} the authors introduce an extension of the definition of pseudo-entanglement from~\cite{aaronson2022quantum} where the distinguisher $\mathcal{A}$ is given circuits that create the quantum states, instead of copies of the states themselves. A similar extension can be made to our definition when this is of interest. %
Whichever definition is chosen for the computational distinguishability part of the definition, the crucial aspect for us is the requirement on the computational entanglement measures in items 1.\ and 2.\ of Definition~\ref{def:ps-general}.

Having stated our general definition we specialize it to define pseudo-\emph{maximally}-entangled states. 

\begin{definition}[Pseudo-maximally-entangled states family]\label{def:ps-ent}
Let $n,\kappa:\N_+\to \N$ be polynomially bounded functions and $\eps:\N_+\to [0,1]$ and $c:\N_+\to \N$ be arbitrary.	
	A family of (potentially mixed) $2n(\lambda)$-qubit bipartite quantum states $\{\psi^k_{AB}\}_k$ is said to be \emph{$(\eps,c)$-pseudo-maximally-entangled} if it has pseudo-entanglement $(\eps,c,n)$.
\end{definition}

We note that the preceding definition is equivalent to requiring that there is a family  $\{\phi^k_{AB}\}_k$ of computational maximally-entangled states over $2n(\lambda)$ qubits such that the ensembles $\{ \mathbb{E}_k[{\psi^k_{AB}}^{\otimes p(\lambda)}]\}_\lambda$ and $\{\mathbb{E}_k[{\phi^k_{AB}}^{\otimes p(\lambda)}]\}_\lambda$ are computationally indistinguishable, for any polynomial $p$.

        \subsection{Cryptographic primitives}

Before giving explicit constructions in the next section, we review the cryptographic primitives that we will make use of. 
    
    \begin{definition}[PRP] A Pseudorandom Permutation Family (PRP) is defined by a pair of classical deterministic polynomial time algorithms as follows.
    	\begin{itemize}
    		\item $\feval(k, x)$ %
    		takes as input a key $k \in \binset^n$ and an input $x \in \binset^n$. It returns $y \in \binset^n$.
    		\item $\finv(k, y)$ %
    		takes as input a key $k \in \binset^n$ and an input $y \in \binset^n$. It returns $x \in \binset^n$.
    	\end{itemize}
    	We note that here we chose the key and input lengths to be equal (and implicitly both are also equal to the security parameter $\secp$) for the sake of minimizing the number of parameters.
    	
    	We require the following properties.
    	\begin{itemize}
    		\item \textbf{Correctness.} For all $n$, for all $k\in \binset^n$, and for all $x, y \in \binset^n$, it holds that $\feval(k, x)=y$ if and only if $\finv(k, y)=x$.
    		
    		\item \textbf{Pseudorandomness.} For any polynomial time (quantum) algorithm $\cA$ with binary output it holds that
    		   		\[ \left| \Pr_{k \gets \binset^n}[\cA^{\feval(k, \cdot),\finv(k, \cdot)}(1^n)=1] - \Pr_{\pi \gets S_{\binset^n}}[\cA^{\pi, \pi^{-1}}(1^n)=1] \right| = \negl(n)~.\]
Here, the superscript notation $\cA^{\feval(k, \cdot),\finv(k, \cdot)}$ means that the quantum algorithm $\cA$ is allowed to make arbitrary (quantum) queries to evaluate the functions $\feval(k, \cdot)$ and $\finv(k, \cdot)$.
    	\end{itemize}
    \end{definition}
    The existence of PRP is known to be equivalent to the existence of one-way functions under classical reductions (see \cite{GoldreichBook1}). A construction that is secure against quantum adversaries was first given  by Zhandry \cite{Zhandry16prp}.
    
    \begin{remark}
    	For our purposes, we only require a weaker notion of ``one-directional'' pseudorandomness for our PRP. Namely, indistinguishability from random against an adversary that has access to either $\feval$ or $\pi$, but not to their inverses. This weaker notion may be easier to achieve (e.g.\ under weaker assumptions or with higher efficiency) and can readily be plugged into our result.
    \end{remark}

The following lemma will be used to analyze the constructions in the next section. 

\begin{lemma}\label{lem:shrinkblow}
Consider a PRP family $(\feval,\finv)$ and let $n, m \in \bbN$ be s.t. $m>n$. We consider $k_f, k_g, k_h \in \binset^m$, and define the following functions: $f: \binset^m \to \binset^m$ as $f(x) = \feval(k_f, x)$, $g: \binset^n \to \binset^m$ as $g(x) = \feval(k_g, x \| 0^{m-n})$, and $h: \binset^m \to \binset^n$ as $h(x) = \feval(k_h, x)_{|n}$ (i.e.\ the $n$-bit prefix of the outcome). Then for any polynomial function $m=m(n)$ and for any polynomial-time quantum algorithm $\cA$ with binary output it holds that 
    		\[ \left| \Pr_{k_g, k_h}[\cA^{g(h(\cdot))}(1^n)=1] - \Pr_{k_f}[\cA^{f(\cdot)}(1^n)=1] \right| = \negl(n)~.\]
\end{lemma}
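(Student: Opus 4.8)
The plan is to move from $g \circ h$ to $f$ through a short sequence of hybrids, replacing the PRP instances by truly random functions/permutations one at a time and tracking the distribution of the resulting composed function. Write $N = 2^n$ and $M = 2^m$. In the first hybrid I replace $\feval(k_h,\cdot)$ by a uniformly random permutation $\pi_h$ of $\binset^m$, so that $h$ becomes $x \mapsto \pi_h(x)_{|n}$; by the pseudorandomness of the PRP (and since $\cA$ only queries $g\circ h$, hence only queries $h$ indirectly, which is fine — the reduction simply queries its own oracle to answer $\cA$'s queries), this changes $\cA$'s acceptance probability by at most $\negl(n)$. Similarly replace $\feval(k_g,\cdot)$ by a random permutation $\pi_g$ of $\binset^m$, so $g$ becomes $x \mapsto \pi_g(x\|0^{m-n})$, again at the cost of $\negl(n)$. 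And on the other side, replace $\feval(k_f,\cdot)$ by a random permutation $\pi_f$ of $\binset^m$ at the cost of $\negl(n)$. So it suffices to show that, for a random permutation $\pi_f$ on $\binset^m$ and independent random permutations $\pi_g,\pi_h$ on $\binset^m$, the function $x \mapsto \pi_g\big( (\pi_h(x)_{|n}) \| 0^{m-n}\big)$ is indistinguishable, by a polynomial-query quantum algorithm, from $\pi_f$.

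The heart of the argument is the structure of the composed function $F(x) = \pi_g\big( \pi_h(x)_{|n} \| 0^{m-n}\big)$. The inner map $x \mapsto \pi_h(x)_{|n}$ is the $n$-bit prefix of a random permutation, which is a ``random surjection''-like object: for each $n$-bit value $v$, the preimage set $\{x : \pi_h(x)_{|n} = v\}$ has size exactly $M/N = 2^{m-n}$. Then $\pi_g$ injectively relabels these $N$ values $v$ to $N$ distinct $m$-bit strings. So $F$ is a random function of the following form: pick a uniformly random partition of $\binset^m$ into $N$ blocks of equal size $2^{m-n}$ (this is what the fibers of $x\mapsto\pi_h(x)_{|n}$ give, after accounting for the uniformly random labeling of blocks), and pick $N$ distinct random image points (the values $\pi_g(v\|0^{m-n})$), and send every element of block $i$ to image point $i$. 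I would argue this is indistinguishable from $\pi_f$ using a standard ``lazy sampling / compressed oracle'' or ``small-range distribution'' argument à la Zhandry: a random permutation on $M$ points and a random such block function agree up to collision/statistical terms that are exponentially small in $n$ for polynomially many queries, since the only way to tell them apart is to find two inputs with the same image (impossible for $\pi_f$, frequency $\approx q^2/N$ for $F$ where $q$ is the number of queries) or to exploit the regularity of image-point distribution, which again costs exponentially many queries. Concretely I would invoke the indistinguishability of a random permutation from a random function on $M$ points (cost $O(q^2/M)$, negligible), and then the fact that a random function with range of size $N = M/2^{m-n}$, with each output equiprobable, is indistinguishable from a random function with the image being $N$ fixed-but-random distinct points each hit with multiplicity concentrated around $M/N$ — the latter is exactly a ``small-range distribution'' and Zhandry's small-range distribution lemma (or a compressed-oracle computation) gives the bound $O(q^3/N)$.

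I expect the main obstacle to be the last step: cleanly showing that the composed function $F$ — which has a very rigid combinatorial structure (exact block sizes $2^{m-n}$, exactly $N$ distinct images) — is quantum-query-indistinguishable from $\pi_f$, since the naive application of known results (small-range distributions, or random-function-vs-permutation switching) handles the ``image is small and nearly uniform'' direction but the exact-multiplicity structure is more rigid than a product distribution, so some care is needed to bound the distinguishing advantage, likely by first passing through an intermediate hybrid where $F$ is replaced by ``pick $N$ random distinct images and map each $x$ to a uniformly random one of them'' (a genuine small-range distribution), and separately bounding the statistical distance between that and the exact-block version, which is small because with polynomially many queries one never sees enough of the domain to detect the difference between sampling-with-replacement and the exact equipartition. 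If a fully self-contained argument is desired one can instead phrase everything through the compressed-oracle framework, but for a proof sketch citing Zhandry's small-range distribution lemma and the random-permutation/random-function switching lemma should suffice.
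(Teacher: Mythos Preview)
Your proposal is correct and follows essentially the same route as the paper's proof: replace the three PRP instances by random permutations, then argue that $\pi_g\big((\pi_h(\cdot)_{|n})\|0^{m-n}\big)$ is indistinguishable from $\pi_f$ via Zhandry's small-range distribution result, with intermediate hybrids to smooth out the rigid block structure.

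The one presentational difference worth noting is how the ``rigidity'' obstacle you identify is handled. The paper resolves it by inserting two explicit permutation-to-random-function switches (for $\pi_h$ and $\pi_g$) \emph{before} invoking Zhandry. Once $\pi_h$ is a random function, $x\mapsto \pi_h(x)_{|n}$ is just a uniformly random function $\{0,1\}^m\to\{0,1\}^n$, so each input independently lands in a uniformly random ``block'' --- exactly the sampling-with-replacement relaxation you propose as an intermediate hybrid. Likewise, once $\pi_g$ is a random function, the image multiset $\{g(v):v\in\{0,1\}^n\}$ is i.i.d.\ uniform rather than $N$ distinct points. After both switches the composed oracle has i.i.d.\ outputs drawn from a fixed distribution $D_{1,g}$, which is precisely the hypothesis of Zhandry's theorem, and the argument reduces to a clean statistical indistinguishability of polynomially many samples from $D_{1,g}$ versus uniform on $\{0,1\}^m$. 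So what you describe as a separate statistical-distance bound between exact equipartition and sampling-with-replacement is in the paper absorbed into the standard random-permutation/random-function switching lemma applied to $\pi_h$ and $\pi_g$ individually; this buys a slightly cleaner modular argument with no ad hoc combinatorics.
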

\begin{proof}[Proof Sketch]
	The proof proceeds through a sequence of hybrids.
	Consider an adversary $\cA$ and define the random variable $H_0 = H_0(n) = \cA^{g(h(\cdot))}(1^n)$, where $g,h$ are defined based on uniformly random keys $k_g, k_h$.
	
	We then let $\pi_g, \pi_h$ be uniformly random permutations on $\{0,1\}^m$. In $H_1$ we change $h$ to be defined using $\pi_h$ instead of $\feval(k_h, \cdot)$. It holds that $\norm{H_0-H_1}_1=\negl(n)$ because of PRP pseudorandomness, where $\norm{\cdot}_1$ denotes $L_1$ norm. Similarly, in $H_2$ we also change $g$ to be defined using $\pi_g$ instead of $\feval(k_g, \cdot)$. Again we have $\norm{H_1-H_2}=\negl(n)$ for the same reason.
	
	The next few hybrids do not rely on computational assumptions but rather on query lower bounds to random permutations and functions. Let $Q$ be an upper bound on the number of queries made by $\mA$ to its oracle. The next bounds only rely on the assumption that $Q=\poly(n)$. 

In hybrid $H_3$ we change $\pi_h$ to be a random \emph{function} rather than a permutation. It holds that $\norm{H_2-H_3}=\negl(n)$ using the query lower bound for distinguishing random functions from random permutations \cite{Yuen13}. Hybrid $H_4$ does the same for $\pi_g$, and by the same theorem $\norm{H_3-H_4}=\negl(n)$.
	
	In $H_5$ we make a more radical change. Let $r: \binset^m \to \binset^m$ be a random function, we let $H_5 = \cA^{r(\cdot)}(1^n)$. We show that $\norm{H_4-H_5}=\negl(n)$ using Zhandry's small domain theorem \cite[Theorem~1.1]{Zhandry12}. 
	
	 Fix the function $g(x)=\pi_g(x\|0^{m-n})$. This function defines a multi-set $S_g\subseteq \{0,1\}^m$ of size $2^n$ (namely, it possibly has repetitions). Let $D_{1,g}$ be the uniform distribution over $S_g$. Let $D_{2}$ be the uniform distribution over $\{0,1\}^m$. Using~\cite[Theorem~1.1]{Zhandry12}, an algorithm that distinguishes $H_4$ from $H_5$ with some advantage can be transformed into an algorithm that distinguishes $D_{1,g}$ from $D_2$ by getting polynomially samples from either distribution. Since the reduction is uniform, this distinguisher has advantage in expectation over $g$ as well. However, when $g$ is chosen at random, polynomially many samples from $D_{1,g}$ are statistically indistinguishable from polynomially many samples from $D_2$, which completes the argument. 
	
		Next, we define $H_6 = \cA^{\pi_f(\cdot)}(1^n)$, where $\pi_f$ is a random permutation. Here again we use the function to permutation indistinguishability to argue that $\norm{H_5-H_6}=\negl(n)$. 
	
	Our last hybrid again relies on a cryptographic assumption. In $H_7$ we replace $\pi_f$ with $f$ and use PRP pseudorandomness to deduce that $\norm{H_6-H_7}=\negl(n)$.
	
	We can conclude that $\norm{H_0-H_7}=\negl(n)$ and the lemma follows.
\end{proof}

\subsection{Our construction}
\label{sec:constructions}

\begin{definition}\label{def:comp-pse}

Consider a PRP family $(\feval,\finv)$. For all polynomially bounded functions $n, m:\N_+\to\N$ such that $m>n$ we define the following. Let $\kappa(\lambda)=2m(\lambda)$. For $m=m(\lambda)$ and $k_f, k_g, k_h \in \binset^m$, define  $f(x) = \feval(k_f, x)$, $g(x) = \feval(k_g, x \| 0^{m-n})$, $h(x) = \feval(k_h, x)_{|n}$. Note that $g,h$ are efficiently computable given $k = (k_g, k_h)$. 

We now define our state ensembles. For any $\lambda$ and $k = (k_g, k_h)\in\{0,1\}^{\kappa(\lambda)}$, define
\begin{align}\label{eq:comp-psik}
 \ket{\psi^{k}} \,=\, \frac{1}{\sqrt{2^{m}}} \sum_{x\in \{0,1\}^{m}} \ket{x}_A \otimes \ket{g(h(x))}_B \;,
	\end{align}
and for $k'=k_f\in\{0,1\}^{m(\lambda)}$,
\begin{align} \label{eq:comp-phik}
	\ket{\varphi^{k'}} \,=\, \frac{1}{\sqrt{2^{m}}} \sum_{x\in \{0,1\}^{m}} \ket{x}_A \otimes \ket{f(x)}_B \;.
\end{align}
\end{definition}

It is immediate from the definitions that both families $\{(k,\psi^k)\}$ and $\{(k',\varphi^{k'})\}$ are efficiently computable. The following lemma shows that they are also computationally indistinguishable (of course, this relies on the distinguisher not having access to the key).

\begin{lemma}\label{lem:phipsiind}
	For all $t=\poly(\lambda)$, it holds that $\bbE_k[(\ket{\psi_{k}}\bra{\psi_{k}})^{\otimes t}]$ and  $\bbE_{k'}[(\ket{\varphi_{k'}}\bra{\varphi_{k'}})^{\otimes t}]$ are computationally indistinguishable.
\end{lemma}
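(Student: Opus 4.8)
The plan is to reduce indistinguishability of the two state ensembles to the oracle-indistinguishability statement of Lemma~\ref{lem:shrinkblow}. The key observation is that each state $\ket{\psi^k}$ (resp.\ $\ket{\varphi^{k'}}$) is exactly the ``canonical'' bipartite state associated with the function $g\circ h$ (resp.\ $f$): a uniform superposition over $x$ on the $A$ side, entangled with the value of the function on the $B$ side. Therefore the whole task is to show that an adversary holding $t$ copies of the state associated with an unknown function $\phi$ can be simulated by an adversary making $\poly(\lambda)$ quantum queries to $\phi$, and vice versa; once this is established, a distinguisher between $\bbE_k[(\psi^k)^{\otimes t}]$ and $\bbE_{k'}[(\varphi^{k'})^{\otimes t}]$ yields a distinguisher between oracle access to $g\circ h$ (with random $k_g,k_h$) and oracle access to $f$ (with random $k_f$), contradicting Lemma~\ref{lem:shrinkblow}.

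First I would record the ``copies from queries'' direction. Given quantum query access to a function $\phi:\{0,1\}^m\to\{0,1\}^m$, one can prepare $\ket{\psi^\phi} = 2^{-m/2}\sum_x \ket{x}_A\ket{\phi(x)}_B$ with a single query: start from $2^{-m/2}\sum_x\ket{x}_A\ket{0^m}_B$ and apply the oracle (the oracle is classical/deterministic here, so the standard $\ket{x}\ket{y}\mapsto\ket{x}\ket{y\oplus\phi(x)}$ form, used in-place, gives exactly this state). Repeating $t$ times produces $t$ independent copies with $t = \poly(\lambda)$ queries. So any poly-time distinguisher $\mathcal{A}$ acting on $t$ copies of the state induces a poly-time, poly-query algorithm $\mathcal{B}^\phi(1^\lambda)$ that prepares the $t$ copies and runs $\mathcal{A}$ on them. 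Applying $\mathcal{B}$ with $\phi = g\circ h$ (keys $k_g,k_h$ uniform) versus $\phi = f$ (key $k_f$ uniform) is precisely the setup of Lemma~\ref{lem:shrinkblow}, which asserts the two are indistinguishable up to $\negl(\lambda)$. Hence $\big|\Pr[\mathcal{A}(\bbE_k[(\psi^k)^{\otimes t}])=1] - \Pr[\mathcal{A}(\bbE_{k'}[(\varphi^{k'})^{\otimes t}])=1]\big| \le \negl(\lambda)$, which is the claim.

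I would then double-check the one subtlety in this reduction: Lemma~\ref{lem:shrinkblow} is stated for an adversary with \emph{forward} oracle access to $g\circ h$ or $f$, and indeed the copy-preparation procedure above only ever queries the function in the forward direction, so no inversion oracle is needed and the reduction goes through verbatim. (This is consistent with the remark in the paper that only ``one-directional'' PRP security is used.) A second point to verify is that $g\circ h$ is a well-defined function $\{0,1\}^m\to\{0,1\}^m$ — $h$ maps $m$ bits to $n$ bits and $g$ maps $n$ bits to $m$ bits, so the composition has the right type and $\ket{\psi^k}$ as written in~\eqref{eq:comp-psik} matches $\ket{\psi^{g\circ h}}$.

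The main obstacle, and the only place requiring genuine care, is confirming that a single forward query suffices to prepare one clean copy of $\ket{\psi^\phi}$ — i.e.\ that no ``garbage'' register is left entangled with $A$ and $B$. Since $\phi$ here is a deterministic classical function evaluated by a reversible unitary $U_\phi:\ket{x}\ket{y}\mapsto\ket{x}\ket{y\oplus\phi(x)}$, running $U_\phi$ on $2^{-m/2}\sum_x\ket{x}_A\ket{0^m}_B$ gives exactly $2^{-m/2}\sum_x\ket{x}_A\ket{\phi(x)}_B$ with no ancilla, so this is immediate; the only thing to note is that in Lemma~\ref{lem:shrinkblow} the oracle for $g\circ h$ can be implemented from the oracles for $g$ and $h$ by the standard uncompute trick (query $h$ into a scratch register, query $g$, uncompute $h$), costing two queries per evaluation, which is still $\poly(\lambda)$. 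With that in hand the rest is bookkeeping, and the lemma follows.
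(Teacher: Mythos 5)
Your proposal is correct and takes essentially the same route as the paper, which simply states that the lemma ``immediately follows'' from Lemma~\ref{lem:shrinkblow}; your reduction (preparing each copy of the state with one forward query to the oracle and invoking the oracle-indistinguishability) is exactly the standard argument behind that ``immediately,'' and your checks about forward-only access and the absence of garbage registers are the right ones.
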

\begin{proof}
	The lemma immediately follows from Lemma~\ref{lem:shrinkblow} above.
\end{proof}

We note that these families are very similar to other families of pseudorandom or pseudo-entangled states that have been previously considered in the literature. However, in contrast to other constructions, here we are able to give a good upper bound on the computational entanglement cost.

\begin{lemma}\label{lem:psicost}
	There exists an efficient LOCC family $\{\Gamma^\lambda\}$ that takes $k\in\{0,1\}^{2m(\lambda)}$ and $n(\lambda)$ EPR pairs as input and returns $\psi^{k}$. In other words, $\hat{E}_C^0(k,\psi^k)\lass n$. 
\end{lemma}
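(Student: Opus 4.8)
The plan is to exhibit a concrete one-way LOCC protocol that, given the key $k=(k_g,k_h)$ and $n(\lambda)$ EPR pairs, produces $\ket{\psi^k}$ exactly. The key observation is that $\ket{\psi^k}$ has a very special structure: since $h:\{0,1\}^m\to\{0,1\}^n$ is of the form $h(x)=\feval(k_h,x)_{|n}$ with $\feval(k_h,\cdot)$ a permutation of $\{0,1\}^m$, the preimage $h^{-1}(y)$ of each $y\in\{0,1\}^n$ has exactly $2^{m-n}$ elements, and these preimages partition $\{0,1\}^m$. Hence, grouping the sum in~\eqref{eq:comp-psik} by the value $y=h(x)$, we can write
\begin{equation}\label{eq:psik-rewrite}
\ket{\psi^k} \,=\, \frac{1}{\sqrt{2^{n}}} \sum_{y\in\{0,1\}^n} \Bigl( \frac{1}{\sqrt{2^{m-n}}}\sum_{x:\,h(x)=y} \ket{x}_A \Bigr)\otimes \ket{g(y)}_B \,=\, \frac{1}{\sqrt{2^{n}}} \sum_{y\in\{0,1\}^n} \ket{\alpha_y}_A\otimes \ket{g(y)}_B\;,
\end{equation}
where $\ket{\alpha_y}_A$ is a unit vector on the $m$-qubit $A$ register, and the $\ket{\alpha_y}$ are mutually orthogonal (since the preimage sets are disjoint) and the $\ket{g(y)}$ are mutually orthogonal on the $B$ side (since $g(y)=\feval(k_g,y\|0^{m-n})$ and $\feval(k_g,\cdot)$ is a permutation). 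Thus $\ket{\psi^k}$ is, up to a local isometry on each side, an $n$-EPR-pair state: its Schmidt rank is $2^n$ with uniform coefficients.

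Given this, the protocol is the natural teleportation-style dilution. Alice and Bob start with $\ket{\Phi}^{\otimes n}=\frac{1}{\sqrt{2^n}}\sum_{y\in\{0,1\}^n}\ket{y}_{A_0}\ket{y}_{B_0}$ on $n$-qubit registers. Bob, who holds $k=(k_g,k_h)$, applies locally the efficient isometry $\ket{y}_{B_0}\mapsto\ket{g(y)}_{B}$ (computable from $k_g$ via $\feval$, into an $m$-qubit register). Alice, who also holds $k$, applies locally the efficient isometry $\ket{y}_{A_0}\mapsto\ket{\alpha_y}_A$; concretely, she can compute $\ket{\alpha_y}_A=\frac{1}{\sqrt{2^{m-n}}}\sum_{z\in\{0,1\}^{m-n}}\ket{\feval(k_h,\cdot)^{-1}(y\|z)}$ by first creating a uniform superposition over the $m-n$ ``free'' coordinates, forming $y\|z$ with $z$ in superposition, and then applying $\finv(k_h,\cdot)$ in place; this uses only $\feval$, $\finv$ and Hadamards, all $\poly(\lambda)$. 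The resulting joint state is exactly the right-hand side of~\eqref{eq:psik-rewrite}, i.e.\ $\ket{\psi^k}$, so $\bar p_{err}=0$. Note no communication is even needed, so this is in fact an LO protocol; it is certainly one-way LOCC and efficient, with circuit size $\poly(\lambda)$, so $n(\lambda)$ is a valid upper bound on $\hat E_C^0(\{k,\psi^k\})$, hence also on $\hat E_C^0(k,\psi^k)$, which gives the claimed $\hat E_C^0(k,\psi^k)\lass n$.

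The only mild subtlety — and the one step worth stating carefully — is that Alice's isometry $\ket{y}\mapsto\ket{\alpha_y}$ genuinely maps an $n$-qubit register into an $m$-qubit register reversibly on its image; this is fine because the map $(y,z)\mapsto \feval(k_h,y\|z)^{-1}$ is a bijection of $\{0,1\}^m$, so creating $z$ in uniform superposition and applying $\finv(k_h,\cdot)$ in place is unitary, and tracing out nothing is needed. I would also remark that the protocol can equivalently be described as: prepare $\ket{\psi^k}$ locally on Alice's side by the efficient circuit of Definition~\ref{def:comp-pse} (using $g,h$), then use the $n$ EPR pairs to teleport the $B$ register — but because the $B$ register actually only carries $n$ qubits of entanglement with $A$ (the Schmidt rank is $2^n$), one only needs $n$ EPR pairs after first locally compressing Bob's share, which is exactly what the direct construction above accomplishes. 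I do not anticipate a real obstacle here; the whole content is the regrouping identity~\eqref{eq:psik-rewrite} and the observation that both sides' isometries are efficiently implementable from the key.
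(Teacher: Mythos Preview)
Your proof is correct and is essentially the same as the paper's. The paper gives exactly the LO protocol you describe: start with $n$ EPR pairs, Alice appends $m-n$ qubits in the $\ket{+}$ state and applies $\finv(k_h,\cdot)$ in place, while Bob appends $m-n$ zero qubits and applies $\feval(k_g,\cdot)$; the regrouping identity~\eqref{eq:psik-rewrite} you spell out is precisely the ``change of variable'' the paper invokes after Alice's step.
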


\begin{proof}
	We propose the following LOCC (in fact LO) protocol.
	\begin{enumerate}
		\item Start with $\frac{1}{\sqrt{2^n}}\sum_{y \in \binset^n} \ket{y}_{A_0}  \otimes \ket{y}_{B_0}$.
		\item Locally add registers: $\frac{1}{\sqrt{2^n}}\sum_{y \in \binset^n} \ket{y}_{A_0} \ket{+}^{\otimes(m-n)}_{A_1} \otimes \ket{y}_{B_0} \ket{0}^{\otimes(m-n)}_{B_1}$.
		\item Party $A$ locally applies the unitary $\ket{x} \to \ket{\finv(k_h, x)}$ to its local register $A=A_0 A_1$. We get (after change of variable) $\frac{1}{\sqrt{2^m}}\sum_{x \in \binset^m} \ket{x}_{A} \otimes \ket{h(x)}_{B_0} \ket{0}^{\otimes(m-n)}_{B_1}$.
		\item Party $B$ locally applies the unitary $\ket{x} \to \ket{\feval(k_g, x)}$ to its local register $B=B_0 B_1$. We get $\frac{1}{\sqrt{2^m}}\sum_{x \in \binset^m} \ket{x}_{A} \otimes \ket{g(h(x))}_{B}$.
	\end{enumerate}
	The resulting state is exactly $\ket{\psi_{k}}$.
\end{proof}

Finally, and in contrast to Haar-random states, for our construction the highly entangled state family also has high, indeed maximal, computational distillable entanglement. 

\begin{lemma}\label{lem:phidistill}
	There exists an efficient LOCC family $\{\Gamma^\lambda\}$ that takes $k'\in\{0,1\}^{m(\lambda)}$  as input and produces $m$ EPR pairs.	In other words, $\hat{E}_D^0(k,\psi^k)\gass m$.  
\end{lemma}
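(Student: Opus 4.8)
The plan is to exhibit the distillation map explicitly; in fact a single local operation on $B$'s system suffices, so the protocol is LO (no communication is needed). Recall that $\ket{\varphi^{k'}} = \frac{1}{\sqrt{2^{m}}}\sum_{x\in\{0,1\}^{m}}\ket{x}_A\otimes\ket{f(x)}_B$ with $f=\feval(k_f,\cdot)$ a permutation of $\{0,1\}^{m}$, and the map is given $k'=k_f$ as input (formally, following Definition~\ref{def:comp-dist-unif}, a copy of $k_f$ sits in an ancilla register on each side). First I would have party $B$ apply, using the key held in its ancilla register, the unitary $U_{k_f}:\ket{y}\mapsto\ket{\finv(k_f,y)}$ to its main register. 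Since $\finv(k_f,\cdot)$ is the inverse permutation of $f$, this transforms the state into
\begin{equation*}
 (I_A\otimes U_{k_f})\ket{\varphi^{k'}} \,=\, \frac{1}{\sqrt{2^{m}}}\sum_{x\in\{0,1\}^{m}}\ket{x}_A\otimes\ket{\finv(k_f,f(x))}_B \,=\, \frac{1}{\sqrt{2^{m}}}\sum_{x\in\{0,1\}^{m}}\ket{x}_A\ket{x}_B \,=\, \ket{\Phi}^{\otimes m}\;,
\end{equation*}
i.e.\ exactly $m$ EPR pairs, so the distillation error is $0$.

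Next I would check efficiency and compliance with the formal definition. The unitary $U_{k_f}$ is the in-place evaluation of a polynomial-time permutation, implemented reversibly in the standard way: compute $\finv(k_f,y)$ into a fresh register using the circuit for $\finv$, then erase the input $y$ by running the circuit for $\feval(k_f,\cdot)$ on the new register and XOR-ing into the old one; this costs $\poly(\lambda)$ gates and leaves no garbage ancilla entangled with $A$. Hence the family $\{\Gamma^\lambda\}$, where $\Gamma^\lambda$ applies $I_A\otimes U_{k_f}$ and outputs the $2m(\lambda)$ qubits of $(A,B)$, is an efficient LOCC family, takes $k'$ as input, outputs a $2m(\lambda)$-qubit state, and satisfies $p_{err}(\Gamma^\lambda,k',\varphi^{k'})=0\le\varepsilon(\lambda)$ with $\varepsilon\equiv 0$ for every $k'\in\{0,1\}^{m(\lambda)}$. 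By Definition~\ref{def:comp-dist-unif}, the function $n\mapsto m(n)$ is therefore a valid lower bound on $\hat E_D^0(\{k',\varphi^{k'}\})$, and since this holds for all $\lambda$ we trivially obtain $\hat E_D^0(\{k',\varphi^{k'}\})\gass m$. (The statement of the lemma as typeset refers to $\psi^k$; the intended family is the highly entangled one, $\varphi^{k'}$, which is computationally indistinguishable from $\{\psi^k\}$ by Lemma~\ref{lem:phipsiind}.)

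There is essentially no real obstacle here: unlike the entanglement-cost separation, inverting a PRP is immediate once the key is known, and indeed $f$ was chosen to be a permutation precisely so that $B$ can undo it in place. The only points requiring care are (i) the in-place implementation of the permutation unitary, so that no ancilla qubit is left entangled with $A$ (which would spoil the fidelity), and (ii) bookkeeping with the formal definition --- that the map may depend on $\lambda$ and receive the key, and that preparing the target state exactly yields error $0$, so the omitted-superscript convention applies with negligible (here, zero) $\varepsilon$. Combined with Lemmas~\ref{lem:psicost} and~\ref{lem:phipsiind}, this shows that $\{\psi^k\}$ is $(0,n)$-pseudo-maximally-entangled in the sense of Definition~\ref{def:ps-ent}.
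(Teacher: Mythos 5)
Your proposal is correct and is exactly the paper's argument: the paper's proof is the one-line observation that party $B$ can locally apply $\finv(k_f,\cdot)$ to obtain the maximally entangled state. Your additional care about the in-place reversible implementation and the $\psi^k$/$\varphi^{k'}$ notational slip in the lemma statement are sensible elaborations of the same approach.
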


\begin{proof}
This follows essentially by definition, since party $B$ can locally apply $\finv(k_f, \cdot)$, to obtain the maximally entangled state.
\end{proof}

Combining the previous lemma gives the following. 

\begin{corollary}
	Let $n(\secp), m(\secp)$ be polynomial functions s.t.\ $\secp \le n(\secp) \le m(\secp)$ for all $\secp$.
	Assuming the existence of quantum-secure one-way functions, there exists a family of  (pure) $2m$-qubit states that has pseudo-entanglement $(0,n,m)$. In particular, this family is $(0,n)$-pseudo-maximally-entangled.
\end{corollary}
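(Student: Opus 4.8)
The plan is to assemble the three lemmas of this subsection and verify that their parameters match the requirements of Definition~\ref{def:ps-general} and Definition~\ref{def:ps-ent}; no new argument is needed. Concretely, I would take the ``cheap'' family to be $\{(k,\psi^k)\}$ with $k=(k_g,k_h)\in\{0,1\}^{2m(\lambda)}$ and the reference ``maximally entangled'' family to be $\{(k',\varphi^{k'})\}$ with $k'=k_f\in\{0,1\}^{m(\lambda)}$, as in Definition~\ref{def:comp-pse}. Both are $2m(\lambda)$-qubit families, so when invoking Definition~\ref{def:ps-general} I instantiate its number-of-qubits parameter as $m(\lambda)$, and set $c(\lambda)=n(\lambda)$, $d(\lambda)=m(\lambda)$, $\eps\equiv 0$.

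First I would check condition~1: Lemma~\ref{lem:psicost} exhibits an efficient LOCC (in fact LO) family that, on input $k$ and $n(\lambda)$ EPR pairs, exactly prepares $\psi^{k}$ for every $k$; this is precisely the statement that $n$ is a valid upper bound on $\hat{E}_C^0(\{k,\psi^k\})$ in the sense of Definition~\ref{def:comp-cost-unif}. Condition~2 is given by Lemma~\ref{lem:phidistill}: on input $k'=k_f$, party $B$ applies $\finv(k_f,\cdot)$ and turns $\varphi^{k'}$ into $m(\lambda)$ EPR pairs with zero error, so $m$ is a valid lower bound on $\hat{E}_D^0(\{k',\varphi^{k'}\})$; equivalently, $\{(k',\varphi^{k'})\}$ is a computational maximally-entangled family of $2m(\lambda)$-qubit states. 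Condition~3 is exactly Lemma~\ref{lem:phipsiind}: for every polynomial $p$ the ensembles $\bbE_k[(\ket{\psi^k}\bra{\psi^k})^{\otimes p(\lambda)}]$ and $\bbE_{k'}[(\ket{\varphi^{k'}}\bra{\varphi^{k'}})^{\otimes p(\lambda)}]$ are computationally indistinguishable---this is the step where the existence of quantum-secure one-way functions is used, since it supplies the post-quantum PRP of Zhandry feeding Lemma~\ref{lem:shrinkblow}. Combining the three gives pseudo-entanglement $(0,n,m)$; and since the reference family is computational maximally-entangled over $2m$ qubits, Definition~\ref{def:ps-ent} (whose ``$n$'' equals our $m$) immediately gives that $\{\psi^k\}$ is $(0,n)$-pseudo-maximally-entangled. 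The hypothesis $\secp\le n(\secp)\le m(\secp)$ is used only to guarantee $m>n$ (so that Lemma~\ref{lem:shrinkblow} and Definition~\ref{def:comp-pse} apply) and that the PRP security parameter does not exceed the relevant string lengths.

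I do not expect a genuine obstacle: all the substance lies in the already-proved Lemmas~\ref{lem:shrinkblow},~\ref{lem:psicost},~\ref{lem:phidistill}, and~\ref{lem:phipsiind}. The only points demanding a little care are bookkeeping: the $\psi$- and $\varphi$-families carry keys of different lengths ($2m$ versus $m$), which is harmless because Definition~\ref{def:ps-general} only asks for \emph{some} polynomially-bounded key-length function for each family; and one must keep straight that the third slot of ``pseudo-entanglement $(\eps,c,d)$'' is compared against \emph{half} the number of qubits when specializing to the pseudo-maximally-entangled notion, i.e.\ $d=m$ is the correct value for $2m$-qubit states. Once these conventions are aligned, the corollary follows by direct substitution.
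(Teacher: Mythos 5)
Your proposal is correct and follows essentially the same route as the paper's proof: instantiate Definition~\ref{def:ps-general} with the families $\{(k,\psi^k)\}$ and $\{(k',\varphi^{k'})\}$ and verify the three conditions via Lemmas~\ref{lem:psicost}, \ref{lem:phidistill}, and \ref{lem:phipsiind}. Your added bookkeeping remarks (the differing key lengths and the $d=m$ convention for $2m$-qubit states) are accurate and, if anything, slightly more careful than the paper's own two-line verification.
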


\begin{proof}
	Let $n,m$ be as in the corollary statement. Consider, for all $\secp$, the states $\rho_{AB}^k = \ket{\psi_{k}}\bra{\psi_{k}}$ and $\sigma_{AB}^{k'}=\ket{\varphi_{k'}}\bra{\varphi_{k'}}$, defined in~\eqref{eq:comp-psik} and~\eqref{eq:comp-phik} respectively. 
	
	Lemma~\ref{lem:phipsiind} guarantees that the two ensembles are computationally indistinguishable for any polynomial number of copies. Lemma~\ref{lem:psicost} shows that $n$ is an upper bound on $\hat{E}_C^{0}(\{k,\psi_{k}\}_k)$, and Lemma~\ref{lem:phidistill} shows that $m$ is a lower bound on $\hat{E}_D^{0}(\{k,\sigma_{k}\}_k)$.
\end{proof}

\section{Relation to other topics}\label{sec:rel-other-top}

    Building on our approach one may continue to develop what we term ``computational entanglement theory.'' Questions that have up until now been investigated in ``standard'' entanglement theory can be asked in the computational context. Perhaps surprisingly, translating well-known results in quantum information theory to the computational setting is not trivial; by which we mean that it does not follow by simply adding the word ``efficient'' where it seems needed. 
    In the next section we discuss a notable and important example-- the relation between entanglement and randomness. Following that, we proceed and examine the connections between our model computational entanglemenet and quantum gravity.

    \subsection{Pseudo-entanglement vs.\@ pseudo-randomness}

    It is well known that entanglement is ``monogamous'': If Alice and Bob are maximally entangled then Eve, a third party, must be completely decoupled from them. Mathematically, for every state $\Psi_{ABE}$ such that $\mathrm{Tr}_E \Psi_{ABE} = \Phi_{AB}$, where $\Phi_{AB}$ is a maximally entangled state, it holds that $\Psi_{ABE}=\Phi_{AB}\otimes \psi_E$. This, in particular, implies that when Alice performs, e.g., an $X$ or $Z$ basis measurement on her system $A$, the outcome is completely \emph{random} from Eve's point of view (i.e.\ even when given access to $\psi_E$). 
    Techniques such as entropic uncertainty relations and decoupling theorems (see for example~\cite{dupuis2014one,coles2017entropic}) allow one to make similar statements quantitative in the case when Alice and Bob's state is not a perfect maximally entangled state.
		
		Arguments such as this one allow us to see entanglement and true randomness as, in some sense, two sides of the same coin. This well-known observation raises an analogous fundamental question regarding the relation between computational entanglement and randomness, or pseudo-entanglement and pseudo-randomness. 
    To phrase it emphatically, \emph{is computational entanglement monogamous}, in some quantitatively meaningful way? 
    
    Let us first consider the information-theoretic setting.
    For a pure state~$\Psi_{ABE}$, given a lower bound on the (quantum) conditional min-entropy $H_{\min}(A|E)$, one can decouple Alice's state from Eve's by preforming only local operations on Alice's system; this is shown for example in~\cite{dupuis2014one}. Denote the outcome of the process by $\tilde{\psi}_{\tilde{A}E}=\mathbb{I}_{\tilde{A}}\otimes \psi_{E}$, where~$\mathbb{I}_{\tilde{A}}$ is the maximally mixed state on a smaller system $\tilde{A}$.\footnote{The setup here is a quantum version of classical randomness extractors; In the classical case, by applying an extractor on a classical random variable $A$ we end with a smaller system $\tilde{A}$ which is close to uniform even given the side information, assuming the conditional min-entropy is sufficiently high. See also~\cite{berta2013quantum}.} If Alice was to measure her state, she would get true randomness.
    Now, since Alice and Eve's state is $\mathbb{I}_{\tilde{A}}\otimes \psi_{E}$, Uhlmann's theorem implies that there exists a unitary on Bob's system~$B$ that will ``locate'' a subspace $\tilde{B}$ such that the final state on the registers $\tilde{A}\tilde{B}E$ is $\tilde{\Psi}_{\tilde{A}\tilde{B}E}=\Phi_{\tilde{A}\tilde{B}}\otimes \psi_{E}$.
    In other words, we observe that distilling entanglement between Alice and Bob and decoupling Alice's system from Eve have the same effect. 
    
    The statement that we just gave, however, is based on the \emph{existence} of a unitary on Bob's side. Unfortunately, the fact that the unitary exists does not mean that finding and applying it is tractable. In fact, it implicitly follows from~\cite{aaronson2022quantum} that the problem is hard in general. The complexity of finding and applying the unitary is further studied in the recent work~\cite{bostanci2023unitary}. 
    Thus, in the computational case we cannot simply use Uhlmann's theorem as it may not lead to an efficient distillation protocol. 

The question of relating computational measures of entanglement and of randomness can also be approached from the angle of entropic measures. 
    The one-shot distillable entanglement $E_D^{\varepsilon}(\rho_{AB})$ is closely related to the conditional smooth max-entropy $H_{\max}^{\varepsilon}(A|B)_{\rho}$.
    The extracted randomness from $A$ in the presence of $E$, on the other hand, is linked to the conditional smooth min-entropy~$H_{\min}^{\varepsilon}(A|E)_{\rho}$. 
    These two entropies are dual to one another. That is, $H_{\min}^{\varepsilon}(A|E)=-H_{\max}^{\varepsilon}(A|B)$~\cite{tomamichel2010duality}. 

    One can now ask whether a similar duality relation holds when we switch to the so called quantum HILL entropies~\cite{chen2017computational}), which are the computational notion of smooth entropies.\footnote{In this case, computational efficiency is a requirement on the distinguisher; see~\cite{chen2017computational}.}
    Denoting the entropies in this computational setup by $\hat{H}_{\min}^{\varepsilon}(A|E)$ and $\hat{H}_{\max}^{\varepsilon}(A|B)$ (for the sake of this section), a duality relation such as $\hat{H}_{\min}^{\varepsilon}(A|E)=-\hat{H}_{\max}^{\varepsilon}(A|B)$ does not necessarily hold. The reason, again, lies in the complexity of switching between different purifications (to see this one should first consider the definitions of the smooth entropies and take into account that the entropies are defined via the purified distance). 

To summarize this section, we observe that our deep understanding, built over many years of explorations in quantum information theory, regarding the relations between entanglement, randomness and entropies does not simply translate to the computational setup: Many open questions are waiting to be answered at this new interface!

    \subsection{Strengthening the link to quantum gravity}\label{sec:ads_cft}

        Entanglement entropy plays an important role in the study of quantum gravity. 
        One well-known example is the Bekenstein-Hawking entropy that assigns an entropy to a black hole in terms of the area of its horizon~\cite{strominger1996microscopic}.  
        Another example is the AdS/CFT correspondence --- a duality between gravity in anti de Sitter space and conformal field theories on a lower dimension. In particular, the celebrated Ryu-Takayanagi formula~\cite{ryu2006holographic,ryu2006aspects} allows one to relate  entanglement entropy in the CFT to an area in AdS. 
        On the technical level, the entanglement entropy is a convenient quantity to work with in the context of quantum field theory as it can be related to path integrals via the so-called ``replica trick''. 
        In a different direction, once the connection was made between gravity and entanglement entropy, deeper connections between quantum gravity and quantum information theory  started to form. 

        Focusing on the topic most relevant for our work, novel ideas were developed to understand the \emph{computational complexity} of the AdS/CFT ``dictionary'' that specifies the exact correspondence between the physical theories~\cite{aaronson2016complexity,bouland2020computational}. 
        In particular, a couple of works~\cite{gheorghiu2020estimating,aaronson2022quantum} directly approached the question by focusing on entanglement entropy and studying the complexity of distinguishing states with low and high entanglement entropy. This is, arguably, also the main motivation for the definition of pseudo-entanglement in~\cite{aaronson2022quantum} via the entanglement entropy. 
        In both works~\cite{gheorghiu2020estimating,aaronson2022quantum} the goal in the context of quantum gravity was to expand the toolkit for constructing pseudo-entangled states with holographic entanglement structures for which the AdS/CFT correspondance is valid. (The works found necessary but not sufficient conditions.)

        Our new definition of pseudo-entanglement, Definition~\ref{def:ps-ent}, seems to capture holographic entanglement in a more refined way --- our definition opens the door for considering entangled states in CFT which, when ``translated'' to the bulk in AdS, admit a more complex geometric description.
        Capturing complex geometries seems to be necessary to conclude something reagarding the complexity of the AdS/CFT dictionary, as it is not expected to be hard to compute in all cases~\cite{engelhardt2021world}.
        The additional flexibility emerges by considering mixed states as well as from the way we define efficiency in terms of bipartite computation (Section~\ref{sec:bp_comp}) and the use of our computational entanglement measures (Section~\ref{sec:comp_ent_meas}). 
        Let us explain this in more detail.

        Different lines of work in high energy physics study holographic entanglement. 
        In~\cite{hayden2013holographic}, for example, it is shown that for a quantum state to have a geometric meaning in the bulk, certain conditions on the mutual information of a \emph{three-partite} system must hold. This means, in particular, that the bipartite marginals are not necessarily pure and one should consider mixed states as we do (with the purifying system being the third party).
        More examples for situations in which mixed states are relevant appear in~\cite{akers2019large,akers2021leading}.
        
        Furthermore, recent research suggests that, mathematically (that is, ignoring the ``concept'' of entanglement), the entanglement entropy is not always the right quantity to use in the AdS/CFT correspondence~\cite{akers2021leading,berkooz2022going,akers2023one}. Leading order corrections are needed when, e.g., one wishes to consider geometries with entangled black holes or mixed matter states~\cite{akers2021leading}.
        In~\cite{akers2023one} it was shown that the more relevant quantity is the \emph{conditional smooth max-entropy}, using ideas from one-shot information theory. 

        From the point of view of entanglement theory, this is not very surprising. As we mentioned throughout the previous sections, the entanglement entropy has no operational meaning as an entanglement measure once the quantum states are not pure. 
        The conditional smooth max-entropy, on the other hand, gives a lower bound on the  one-shot distillable entanglement $E_D^{\varepsilon}$ (recall Lemma~\ref{lem:cond_ent}).   
        This hints that in order to capture different geometries (as in~\cite{akers2021leading,akers2023one}) also in the computational case, working with the computational distillable entanglement $\hat{E}_D^{\varepsilon}$ may be more fruitful.

        The second aspect of our definition of pseudo-entanglement that differs from previous works and is of relevance for studying the complexity of the AdS/CFT correspondence is the definition of efficiency. 
        In our work we define efficiency in terms a efficient bipartite LOCC, with maximally entangled states as inputs/outputs. 
        This fits the notion of non-local computation, which is also used to describe computations that are being done on the boundary in the context of the AdS/CFT correspondence~\cite{may2019quantum,may2020holographic,may2022complexity}.  
        More specifically, one can consider a computation/task in the boundary CFT. As shown in~\cite{may2019quantum}, it should follow from the correspondence that tasks  that have an implementation in the bulk spacetime also have an implementation in the boundary CFT, while maintaining the success probability~\cite{may2021holographic}. Natural scattering tasks in the bulk translate into ``non-local computations'' in the CFT. Here a non-local computation is, loosely speaking, a computation that can be performed ``locally'' while using sufficiently many EPR pairs in order to achieve the required success probability. As shown in the literature some of these tasks may require very large number of EPR pairs. The question of how many EPR pairs can be distilled between two specific boundary regions (the ones associated with the task under consideration) thus becomes highly relevant. Moreover, when studying the computational efficiency of the bulk/boundary translation map it also becomes directly relevant to consider if the entanglement can be distilled efficiently.
        
				When discussing entanglement as a resource in CFT, it is thus expected that we should, firstly, do so while respecting a specific pre-defined bi-partition structure and, secondly, take efficiency into consideration. Both considerations affect the framework of non-local computation that arises so naturally in the AdS/CFT correspondence. They are material for the definition of pseudo-entanglement. 
    Indeed, as shown in Lemma~\ref{lem:no-comp-dist} and Lemma~\ref{lem:ec_gap} the fact that a state of low-entanglement entropy can be created efficiently without respecting the bipartition, as in~\cite{gheorghiu2020estimating,aaronson2022quantum}, does not, a priory at least, imply that it has low computational entanglement cost $\hat{E}_C^{\varepsilon}$, and even less low computational distillable entanglement $\hat{E}_D^{\varepsilon}$. Thus, results regarding pseudo-entangled states defined as in~\cite{gheorghiu2020estimating,aaronson2022quantum} do not necessarily translate to states with holographic entanglement structure, i.e., bulk with a geometric interpretation. 
         Our definition of pseudo-entanglement, on the other hand, fits the notion of efficiency used to study non-local computations in CFT, as discussed in Section~\ref{sec:bp_comp}, and thus may allow to strengthen the study of the complexity of the AdS/CFT dictionary.

\subsection*{Acknowledgments}
We thank Scott Aaronson, Adam Bouland and Chinmay Nirkhe for useful discussions regarding the work~\cite{aaronson2022quantum}, and Alex May and Simon Ross for referring to and explaining recent results about the AdS/CFT correspondence.
RAF was supported by the Israel Science Foundation (ISF), and the Directorate for Defense Research and Development (DDR\&D), grant No.\ 3426/21, by the Minerva foundation with funding from the Federal German Ministry for Education and Research, by the Peter and Patricia Gruber Award and by the Koshland Research Fund.
ZB was supported by the Israel Science Foundation (Grant No.\ 3426/21), and by the European Union Horizon 2020 Research and Innovation Program via ERC Project REACT (Grant 756482).
TV is supported by a research grant from the Center for New Scientists at the Weizmann Institute of Science, AFOSR Grant No. FA9550-22-1-0391, and ERC Consolidator Grant VerNisQDevS (101086733).

\bibliographystyle{alpha}
\bibliography{bib}

\newcommand{\etalchar}[1]{$^{#1}$}
\begin{thebibliography}{AMTdW00}

\bibitem[Aar16]{aaronson2016complexity}
Scott Aaronson.
\newblock The complexity of quantum states and transformations: from quantum
  money to black holes.
\newblock {\em arXiv preprint arXiv:1607.05256}, 2016.

\bibitem[ABF{\etalchar{+}}22]{aaronson2022quantum}
Scott Aaronson, Adam Bouland, Bill Fefferman, Soumik Ghosh, Umesh Vazirani,
  Chenyi Zhang, and Zixin Zhou.
\newblock Quantum pseudoentanglement.
\newblock {\em arXiv e-prints}, pages arXiv--2211, 2022.

\bibitem[AHU19]{ambainis2019quantum}
Andris Ambainis, Mike Hamburg, and Dominique Unruh.
\newblock Quantum security proofs using semi-classical oracles.
\newblock In {\em Advances in Cryptology--CRYPTO 2019: 39th Annual
  International Cryptology Conference, Santa Barbara, CA, USA, August 18--22,
  2019, Proceedings, Part II 39}, pages 269--295. Springer, 2019.

\bibitem[ALL19]{akers2019large}
Chris Akers, Adam Levine, and Stefan Leichenauer.
\newblock Large breakdowns of entanglement wedge reconstruction.
\newblock {\em Physical Review D}, 100(12):126006, 2019.

\bibitem[ALPW23]{akers2023one}
Chris Akers, Adam Levine, Geoff Penington, and Elizabeth Wildenhain.
\newblock One-shot holography.
\newblock {\em arXiv preprint arXiv:2307.13032}, 2023.

\bibitem[AMTdW00]{AmbainisMTW00}
Andris Ambainis, Michele Mosca, Alain Tapp, and Ronald de~Wolf.
\newblock Private quantum channels.
\newblock In {\em {FOCS}}, pages 547--553. {IEEE} Computer Society, 2000.

\bibitem[AP21]{akers2021leading}
Chris Akers and Geoff Penington.
\newblock Leading order corrections to the quantum extremal surface
  prescription.
\newblock {\em Journal of High Energy Physics}, 2021(4):1--73, 2021.

\bibitem[AQY22]{AQY21}
Prabhanjan Ananth, Luowen Qian, and Henry Yuen.
\newblock Cryptography from pseudorandom quantum states.
\newblock In Yevgeniy Dodis and Thomas Shrimpton, editors, {\em Advances in
  Cryptology - {CRYPTO} 2022 - 42nd Annual International Cryptology Conference,
  {CRYPTO} 2022, Santa Barbara, CA, USA, August 15-18, 2022, Proceedings, Part
  {I}}, volume 13507 of {\em Lecture Notes in Computer Science}, pages
  208--236. Springer, 2022.

\bibitem[BBRW22]{berkooz2022going}
Micha Berkooz, Nadav Brukner, Simon~F Ross, and Masataka Watanabe.
\newblock Going beyond er=epr in the syk model.
\newblock {\em Journal of High Energy Physics}, 2022(8):1--52, 2022.

\bibitem[BCF{\etalchar{+}}14]{buhrman2014position}
Harry Buhrman, Nishanth Chandran, Serge Fehr, Ran Gelles, Vipul Goyal, Rafail
  Ostrovsky, and Christian Schaffner.
\newblock Position-based quantum cryptography: Impossibility and constructions.
\newblock {\em SIAM Journal on Computing}, 43(1):150--178, 2014.

\bibitem[BCM{\etalchar{+}}21]{brakerski2021cryptographic}
Zvika Brakerski, Paul Christiano, Urmila Mahadev, Umesh Vazirani, and Thomas
  Vidick.
\newblock A cryptographic test of quantumness and certifiable randomness from a
  single quantum device.
\newblock {\em Journal of the ACM (JACM)}, 68(5):1--47, 2021.

\bibitem[BCQ22]{BCQ22}
Zvika Brakerski, Ran Canetti, and Luowen Qian.
\newblock On the computational hardness needed for quantum cryptography.
\newblock Cryptology ePrint Archive, Paper 2022/1181, 2022.
\newblock \url{https://eprint.iacr.org/2022/1181}.

\bibitem[BCQ23]{brakerski2023computational}
Zvika Brakerski, Ran Canetti, and Luowen Qian.
\newblock On the computational hardness needed for quantum cryptography.
\newblock In {\em 14th Innovations in Theoretical Computer Science Conference
  (ITCS 2023)}. Schloss Dagstuhl-Leibniz-Zentrum f{\"u}r Informatik, 2023.

\bibitem[BDF{\etalchar{+}}11]{boneh2011random}
Dan Boneh, {\"O}zg{\"u}r Dagdelen, Marc Fischlin, Anja Lehmann, Christian
  Schaffner, and Mark Zhandry.
\newblock Random oracles in a quantum world.
\newblock In {\em Advances in Cryptology--ASIACRYPT 2011: 17th International
  Conference on the Theory and Application of Cryptology and Information
  Security, Seoul, South Korea, December 4-8, 2011. Proceedings 17}, pages
  41--69. Springer, 2011.

\bibitem[BEM{\etalchar{+}}23]{bostanci2023unitary}
John Bostanci, Yuval Efron, Tony Metger, Alexander Poremba, Luowen Qian, and
  Henry Yuen.
\newblock Unitary complexity and the uhlmann transformation problem.
\newblock {\em arXiv preprint arXiv:2306.13073}, 2023.

\bibitem[Ber09]{berta2009single}
Mario Berta.
\newblock Single-shot quantum state merging.
\newblock {\em arXiv preprint arXiv:0912.4495}, 2009.

\bibitem[BFG{\etalchar{+}}23]{bouland2023public}
Adam Bouland, Bill Fefferman, Soumik Ghosh, Tony Metger, Umesh Vazirani, Chenyi
  Zhang, and Zixin Zhou.
\newblock Public-key pseudoentanglement and the hardness of learning ground
  state entanglement structure.
\newblock {\em To appear on the arXiv}, 2023.

\bibitem[BFV20]{bouland2020computational}
Adam Bouland, Bill Fefferman, and Umesh Vazirani.
\newblock Computational pseudorandomness, the wormhole growth paradox, and
  constraints on the {AdS}/{CFT} duality.
\newblock In {\em 11th Innovations in Theoretical Computer Science Conference
  (ITCS 2020)}, volume 151, page~63. Schloss Dagstuhl--Leibniz-Zentrum fuer
  Informatik, 2020.

\bibitem[BFW13]{berta2013quantum}
Mario Berta, Omar Fawzi, and Stephanie Wehner.
\newblock Quantum to classical randomness extractors.
\newblock {\em IEEE Transactions on Information Theory}, 60(2):1168--1192,
  2013.

\bibitem[BRS{\etalchar{+}}16]{brown2016complexity}
Adam~R Brown, Daniel~A Roberts, Leonard Susskind, Brian Swingle, and Ying Zhao.
\newblock Complexity, action, and black holes.
\newblock {\em Physical Review D}, 93(8):086006, 2016.

\bibitem[CBTW17]{coles2017entropic}
Patrick~J Coles, Mario Berta, Marco Tomamichel, and Stephanie Wehner.
\newblock Entropic uncertainty relations and their applications.
\newblock {\em Reviews of Modern Physics}, 89(1):015002, 2017.

\bibitem[CCL{\etalchar{+}}17]{chen2017computational}
Yi-Hsiu Chen, Kai-Min Chung, Ching-Yi Lai, Salil~P Vadhan, and Xiaodi Wu.
\newblock Computational notions of quantum min-entropy.
\newblock {\em arXiv preprint arXiv:1704.07309}, 2017.

\bibitem[CG19]{chitambar2019quantum}
Eric Chitambar and Gilad Gour.
\newblock Quantum resource theories.
\newblock {\em Reviews of modern physics}, 91(2):025001, 2019.

\bibitem[CGLQ20]{chung2020tight}
Kai-Min Chung, Siyao Guo, Qipeng Liu, and Luowen Qian.
\newblock Tight quantum time-space tradeoffs for function inversion.
\newblock In {\em 2020 IEEE 61st Annual Symposium on Foundations of Computer
  Science (FOCS)}, pages 673--684. IEEE, 2020.

\bibitem[DBWR14]{dupuis2014one}
Fr{\'e}d{\'e}ric Dupuis, Mario Berta, J{\"u}rg Wullschleger, and Renato Renner.
\newblock One-shot decoupling.
\newblock {\em Communications in Mathematical Physics}, 328:251--284, 2014.

\bibitem[DW05]{devetak2005distillation}
Igor Devetak and Andreas Winter.
\newblock Distillation of secret key and entanglement from quantum states.
\newblock {\em Proceedings of the Royal Society A: Mathematical, Physical and
  engineering sciences}, 461(2053):207--235, 2005.

\bibitem[EPSM21]{engelhardt2021world}
Netta Engelhardt, Geoff Penington, and Arvin Shahbazi-Moghaddam.
\newblock A world without pythons would be so simple.
\newblock {\em Classical and Quantum Gravity}, 38(23):234001, 2021.

\bibitem[GH20]{gheorghiu2020estimating}
Alexandru Gheorghiu and Matty~J Hoban.
\newblock Estimating the entropy of shallow circuit outputs is hard.
\newblock {\em arXiv preprint arXiv:2002.12814}, 2020.

\bibitem[Gol01]{GoldreichBook1}
Oded Goldreich.
\newblock {\em The Foundations of Cryptography - Volume 1: Basic Techniques}.
\newblock Cambridge University Press, 2001.

\bibitem[GW07]{gutoski2007toward}
Gus Gutoski and John Watrous.
\newblock Toward a general theory of quantum games.
\newblock In {\em Proceedings of the thirty-ninth annual ACM symposium on
  Theory of computing}, pages 565--574, 2007.

\bibitem[HHH00]{horodecki2000limits}
Micha{\l} Horodecki, Pawe{\l} Horodecki, and Ryszard Horodecki.
\newblock Limits for entanglement measures.
\newblock {\em Physical Review Letters}, 84(9):2014, 2000.

\bibitem[HHHH09]{horodecki2009quantum}
Ryszard Horodecki, Pawe{\l} Horodecki, Micha{\l} Horodecki, and Karol
  Horodecki.
\newblock Quantum entanglement.
\newblock {\em Reviews of modern physics}, 81(2):865, 2009.

\bibitem[HHM13]{hayden2013holographic}
Patrick Hayden, Matthew Headrick, and Alexander Maloney.
\newblock Holographic mutual information is monogamous.
\newblock {\em Physical Review D}, 87(4):046003, 2013.

\bibitem[JLS18]{ji2018pseudorandom}
Zhengfeng Ji, Yi-Kai Liu, and Fang Song.
\newblock Pseudorandom quantum states.
\newblock In {\em Advances in Cryptology--CRYPTO 2018: 38th Annual
  International Cryptology Conference, Santa Barbara, CA, USA, August 19--23,
  2018, Proceedings, Part III 38}, pages 126--152. Springer, 2018.

\bibitem[Kre21]{kretschmer2021quantum}
William Kretschmer.
\newblock Quantum pseudorandomness and classical complexity.
\newblock In {\em 16th Conference on the Theory of Quantum Computation,
  Communication and Cryptography (TQC 2021)}. Schloss Dagstuhl-Leibniz-Zentrum
  f{\"u}r Informatik, 2021.

\bibitem[KW20]{khatri2020principles}
Sumeet Khatri and Mark~M Wilde.
\newblock Principles of quantum communication theory: A modern approach.
\newblock {\em arXiv preprint arXiv:2011.04672}, 2020.

\bibitem[May19]{may2019quantum}
Alex May.
\newblock Quantum tasks in holography.
\newblock {\em Journal of High Energy Physics}, 2019(10):1--39, 2019.

\bibitem[May21]{may2021holographic}
Alex May.
\newblock Holographic quantum tasks with input and output regions.
\newblock {\em Journal of High Energy Physics}, 2021(8):1--24, 2021.

\bibitem[May22]{may2022complexity}
Alex May.
\newblock Complexity and entanglement in non-local computation and holography.
\newblock {\em Quantum}, 6:864, 2022.

\bibitem[MPS20]{may2020holographic}
Alex May, Geoff Penington, and Jonathan Sorce.
\newblock Holographic scattering requires a connected entanglement wedge.
\newblock {\em Journal of High Energy Physics}, 2020(8):1--34, 2020.

\bibitem[MY22]{morimae2021quantum}
Tomoyuki Morimae and Takashi Yamakawa.
\newblock Quantum commitments and signatures without one-way functions.
\newblock In Yevgeniy Dodis and Thomas Shrimpton, editors, {\em Advances in
  Cryptology - {CRYPTO} 2022 - 42nd Annual International Cryptology Conference,
  {CRYPTO} 2022, Santa Barbara, CA, USA, August 15-18, 2022, Proceedings, Part
  {I}}, volume 13507 of {\em Lecture Notes in Computer Science}, pages
  269--295. Springer, 2022.

\bibitem[NS06]{nayak2006limits}
Ashwin Nayak and Julia Salzman.
\newblock Limits on the ability of quantum states to convey classical messages.
\newblock {\em Journal of the ACM (JACM)}, 53(1):184--206, 2006.

\bibitem[Pag93]{page1993average}
Don~N Page.
\newblock Average entropy of a subsystem.
\newblock {\em Physical review letters}, 71(9):1291, 1993.

\bibitem[PV07]{plenio2007introduction}
Martin~B Plenio and Shashank Virmani.
\newblock An introduction to entanglement measures.
\newblock {\em Quantum Inf. Comput.}, 7(1):1--51, 2007.

\bibitem[RT06a]{ryu2006aspects}
Shinsei Ryu and Tadashi Takayanagi.
\newblock Aspects of holographic entanglement entropy.
\newblock {\em Journal of High Energy Physics}, 2006(08):045, 2006.

\bibitem[RT06b]{ryu2006holographic}
Shinsei Ryu and Tadashi Takayanagi.
\newblock Holographic derivation of entanglement entropy from {AdS}/{CFT}.
\newblock {\em JHEP}, 608:045, 2006.

\bibitem[Sen96]{sen1996average}
Siddhartha Sen.
\newblock Average entropy of a quantum subsystem.
\newblock {\em Physical review letters}, 77(1):1, 1996.

\bibitem[Spe16]{speelman2016instantaneous}
Florian Speelman.
\newblock {Instantaneous Non-Local Computation of Low T-Depth Quantum
  Circuits}.
\newblock In Anne Broadbent, editor, {\em 11th Conference on the Theory of
  Quantum Computation, Communication and Cryptography (TQC 2016)}, volume~61 of
  {\em Leibniz International Proceedings in Informatics (LIPIcs)}, pages
  9:1--9:24, Dagstuhl, Germany, 2016. Schloss Dagstuhl--Leibniz-Zentrum fuer
  Informatik.

\bibitem[Sus16]{susskind2016computational}
Leonard Susskind.
\newblock Computational complexity and black hole horizons.
\newblock {\em Fortschritte der Physik}, 64(1):24--43, 2016.

\bibitem[SV96]{strominger1996microscopic}
Andrew Strominger and Cumrun Vafa.
\newblock Microscopic origin of the bekenstein-hawking entropy.
\newblock {\em Physics Letters B}, 379(1-4):99--104, 1996.

\bibitem[TCR10]{tomamichel2010duality}
Marco Tomamichel, Roger Colbeck, and Renato Renner.
\newblock Duality between smooth min-and max-entropies.
\newblock {\em IEEE Transactions on information theory}, 56(9):4674--4681,
  2010.

\bibitem[Tom15]{tomamichel2015quantum}
Marco Tomamichel.
\newblock {\em Quantum information processing with finite resources:
  mathematical foundations}, volume~5.
\newblock Springer, 2015.

\bibitem[Wat18]{watrous2018theory}
John Watrous.
\newblock {\em The theory of quantum information}.
\newblock Cambridge university press, 2018.

\bibitem[Yan20]{Yan22}
Jun Yan.
\newblock General properties of quantum bit commitments.
\newblock Cryptology ePrint Archive, Paper 2020/1488, 2020.
\newblock \url{https://eprint.iacr.org/2020/1488}.

\bibitem[Yue13]{Yuen13}
Henry Yuen.
\newblock A quantum lower bound for distinguishing random functions from random
  permutations.
\newblock {\em CoRR}, abs/1310.2885, 2013.

\bibitem[Zha12]{Zhandry12}
Mark Zhandry.
\newblock How to construct quantum random functions.
\newblock In {\em 53rd Annual {IEEE} Symposium on Foundations of Computer
  Science, {FOCS} 2012, New Brunswick, NJ, USA, October 20-23, 2012}, pages
  679--687. {IEEE} Computer Society, 2012.
\newblock References are to {ePrint} version
  \url{http://eprint.iacr.org/2012/182}.

\bibitem[Zha15]{zhandry2015secure}
Mark Zhandry.
\newblock Secure identity-based encryption in the quantum random oracle model.
\newblock {\em International Journal of Quantum Information}, 13(04):1550014,
  2015.

\bibitem[Zha16]{Zhandry16prp}
Mark Zhandry.
\newblock A note on quantum-secure {PRP}s.
\newblock {\em CoRR}, abs/1611.05564, 2016.

\end{thebibliography}

\end{document}